\newcolumntype{P}[1]{>{\centering\arraybackslash}p{#1}}
\DeclarePairedDelimiter\ceil{\lceil}{\rceil}
\DeclarePairedDelimiter\floor{\lfloor}{\rfloor}
\theoremstyle{plain}
\newtheorem{theorem}{Theorem}
\newtheorem{lemma}{Lemma}
\newtheorem{corollary}{Corollary}
\theoremstyle{definition}
\newtheorem{definition}{Definition}
\newtheorem{example}{Example}
\newtheorem{remark}{Remark}
\newcommand{\B}{{\mathcal B}}
\newcommand{\C}{{\mathcal C}}
\newcommand{\D}{{\mathcal D}}
\DeclareMathAlphabet{\mathbfsl}{OT1}{ppl}{b}{it} 
\newcommand{\bS}{{\mathbfsl{S}}}
\newcommand{\bL}{\mathbfsl{L}}
\newcommand{\bu}{{\mathbfsl u}}
\newcommand{\bv}{{\mathbfsl v}}
\newcommand{\by}{{\mathbfsl y}}
\newcommand{\bc}{{\mathbfsl c}}
\newcommand{\bw}{{\mathbfsl{w}}}
\newcommand{\bx}{{\mathbfsl{x}}}
\newcommand{\bz}{{\mathbfsl{z}}}
\newcommand{\bT}{{\mathbfsl{T}}}
\newcommand{\bbZ}{{\mathbb Z}}
\newcommand{\ppmod}[1]{~({\rm mod~}#1)}
\renewcommand{\ge}{\geqslant}
\renewcommand{\le}{\leqslant}
\newcommand{\et}{{\emph{et al.}}}
\newcommand{\enc}{\textsc{Enc}}
\newcommand{\dec}{\textsc{Dec}}
\begin{document}

\pagestyle{empty}

\title{Constrained Coding for Composite DNA:\\ Channel Capacity and Efficient Constructions\\[-3mm]}

\author{\IEEEauthorblockN{Tuan Thanh Nguyen\IEEEauthorrefmark{1},
Chen Wang\IEEEauthorrefmark{1}\IEEEauthorrefmark{2},
Kui Cai\IEEEauthorrefmark{1},
Yiwei Zhang\IEEEauthorrefmark{2}, 
and Zohar Yakhini \IEEEauthorrefmark{3}\IEEEauthorrefmark{4}}\\[-3mm]

\IEEEauthorblockA{
\IEEEauthorrefmark{1}
Science, Mathematics and Technology Cluster, Singapore University of Technology and Design, Singapore 487372\\
\IEEEauthorrefmark{2}
School of Cyber Science and Technology, Shandong University, Qingdao, Shandong, 266237, China\\
\IEEEauthorrefmark{3}
School of Computer Science, Herzliya Interdisciplinary Center, Herzliya, Israel\\
\IEEEauthorrefmark{4}
Faculty of Computer Science, Technion - Israel Institute
of Technology, Haifa 3200003, Israel\\
Emails: \{tuanthanh\_nguyen, cai\_kui\}@sutd.edu.sg, cwang2021@mail.sdu.edu.cn, \\ywzhang@sdu.edu.cn, zohar.yakhini@gmail.com \\[-4mm] 
}
}

\maketitle

\hspace{-3mm}\begin{abstract}

\emph{Composite DNA} is a recent novel method to increase the information capacity of DNA-based data storage above the theoretical limit of $2$ bits/symbol. In this method, every composite symbol does not store a single DNA nucleotide but a mixture of the four nucleotides in a predetermined ratio. By using different mixtures and ratios, the alphabet can be extended to have much more than four symbols in the naive approach. While this method enables higher data content per synthesis cycle, potentially reducing the \emph{DNA synthesis} cost, it also imposes significant challenges for accurate \emph{DNA sequencing} since the base-level errors can easily change the mixture of bases and their ratio, resulting in changes to the composite symbols. 
\vspace{1mm}

With this motivation, we propose efficient constrained coding techniques to enforce the biological constraints, including the \emph{runlength-limited constraint} and the \emph{${\tt G}{\tt C}$-content} constraint, into every DNA synthesized oligo, regardless of the mixture of bases in each composite letter and their corresponding ratio. This approach plays a crucial role in reducing the occurrence of errors. Our goals include computing the capacity of the constrained channel, constructing efficient encoders/decoders, and providing the best options for the composite letters to obtain capacity-approaching codes. For certain codes' parameters, our methods incur only one redundant symbol. 








\end{abstract}

\section{Introduction}
Advances in synthesis and sequencing technologies have made DNA macromolecules an attractive medium for digital information storage \cite{church:2012, goldman:2013, Yazdi:2015, erlich:2017}. The crucial challenges, which make DNA storage still far from being practical in storing large data, are due to its high cost and low speed for DNA synthesis and sequencing \cite{hec:2019, Milen:2024}. Particularly, DNA synthesis remains the most costly and time-consuming part of DNA storage, motivating recent research on the efficient synthesis of DNA \cite{Lenz:2020,HM:2023, Sini:2023, TTDNA, immink:2024, tu:2024, thanh:dna, dube:2019, caikui:ep}. 
In most common setting, the digital information is converted to quaternary sequences using the standard DNA alphabet $\{{\tt A}, {\tt T}, {\tt C},{\tt G}\}$, which has a theoretical limit of
$\log_2 4 = 2$ bits/symbol. Since redundancy is required to enhance constrained properties and error-correction capability, the overall information rate is significantly reduced from this limit. Surprisingly, there is a recent novel method, proposed by L. Anavy \et{} \cite{zohar:nature}, that use {\em composite DNA} to increase the information capacity of DNA-based data storage above the theoretical limit of $2$ bits/symbol (see Table I, \cite{zohar:nature}). 


In composite DNA, each composite letter in a specific position consists of a mixture of the four DNA nucleotides in a predetermined ratio, i.e. it can be abstracted as a quartet of probabilities ${p_{\tt A},p_{\tt T},p_{\tt C},p_{\tt G}}$, in which $p_{\tt A}  + p_{\tt T}  + p_{\tt C}  + p_{\tt G}  = 1$. By using different mixtures and ratios, the alphabet can be extended to have much more than four symbols. On the other hand, to identify a composite letter, it is required to sequence a sufficient number of reads and then to estimate $p_{\tt A},p_{\tt T},p_{\tt C},p_{\tt G}$ in each position. While this method enables higher data content per synthesis cycle (equivalently, allows higher information capacity), it also imposes significant challenges for accurate DNA sequencing and requires a much larger {\em coverage} (or {\em sequencing depth}) \cite{zohar:nature, anavy:2024, Cohen:2024, shortmer:2024}. For example, in \cite{zohar:nature}, L. Anavy \et{} encoded 6.4 MB into composite DNA using the alphabet $\Sigma=\{{\tt A}, {\tt T}, {\tt C}, {\tt G}, {\tt M}, {\tt K}\}$, where two composite letters ${\tt M}, {\tt K}$ are defined as ${\tt M}=(0.5,0,0.5,0)$ and ${\tt K}=(0,0.5,0,0.5)$, i.e. ${\rm M}$ consists of a mixture of ${\tt A}$ and ${\tt C}$ in a ratio of $50\%:50\%$ while ${\tt K}$ consists of a mixture of ${\tt G}$ and ${\tt T}$ in a ratio of $50\%:50\%$. The information capacity is increased to $\log_2 6$ bits/symbol, and the authors managed to use $20\%$ fewer synthesis cycles per unit of data, as compared to previous reports. However, fully successful decoding in \cite{zohar:nature} required coverage of 100 reads (roughly increases by 10 times as required in standard DNA synthesis \cite{caikui:ep}). To overcome this challenge, several constructions of error correction codes (ECCs) have been proposed, specifically targeting the composite DNA method \cite{shortmer:2024, Walter:2024}. On the other hand, there is no efficient construction of constrained codes for composite DNA. 
\vspace{0.11mm} 

With this motivation, we propose efficient constrained coding techniques to enforce the biological constraints, including the \emph{runlength-limited constraint} and the \emph{${\tt G}{\tt C}$-content} constraint, into every DNA synthesized oligo, regardless of the mixture of bases in each composite letter and their corresponding ratio, and regardless of the positions of the composite letters in each data sequence. This approach plays a crucial role in reducing the occurrence of errors \cite{Yazdi:2015, hec:2019, Milen:2024, thanh:dna, xhe:2022}. Due to the unique structure of the alphabet symbols in composite DNA, such a constrained coding problem is much more challenging as compared to the standard DNA case, even in the most fundamental setting with one additional composite letter. Our goals include computing the capacity of the constrained channel, constructing linear-time encoders/decoders, and providing the best options for the composite letters to obtain capacity-approaching codes, particularly focusing on the parameters where the channel capacity is strictly larger than 2 bits/symbol.   

\section{Preliminary}

Given two sequences $\bx$ and $\by$, we let $\bx\by$ denote the {\em concatenation} of the two sequences. In the special case where $\bx, \by$ are both of length $n$, we use $\bx||\by$ to denote their {\em interleaved sequence} $x_1y_1x_2y_2\ldots x_ny_n$. For two positive integers $m < n$, we let $[m,n]$ denote the set $\{m,m + 1,\ldots,n\}$ and we define $\bx_{[m,n]}=x_m x_{m+1} \ldots x_n$.

\subsection{Constrained Coding for Standard DNA Synthesis} 
Consider a DNA sequence $\bx$ consisting of $n$ nucleotides, i.e. $\bx=x_1x_2\ldots x_n$ where $x_i\in \{{\tt A}, {\tt T}, {\tt C}, {\tt G}\}$. Given $\ell>0$, we say $\bx$ is $\ell$-runlength limited (or $\ell$-RLL in short) if any run of the same nucleotide is at most $\ell$. On the other hand, the ${\tt G}{\tt C}$-content of $\bx$ refers to the percentage of nucleotides that are either ${\tt G}$ or ${\tt C}$. Formally, for $a\in \{{\tt A}, {\tt T}, {\tt C}, {\tt G}\}$, we use ${\rm wt}_a(\bx)$ to denote the number of $a$ in $\bx$, i.e. ${\rm wt}_a(\bx)=| \{i\in [1,n]: x_i=a\} |$.  The ${\tt G}{\tt C}$-content of $\bx$, denoted by $\omega(\bx)$, is then computed by $\omega(\bx)=1/n ({\rm wt}_{{\tt G}}(\bx)+{\rm wt}_{{\tt C}}(\bx))$. 
Given $\epsilon\ge 0$, we say that $\bx$ is $\epsilon$-balanced if the ${\tt G}{\tt C}$-content of $\bx$ satisfies that $|\omega(\bx)-0.5|\le \epsilon$, or $\omega(\bx) \in [0.5-\epsilon, 0.5+\epsilon]$. Equivalently, we have 
\begin{equation}
n/2-\epsilon n \le {\rm wt}_{{\tt G}}(\bx)+{\rm wt}_{{\tt C}}(\bx) \le n/2+\epsilon n. \label{epsilon-def}
\end{equation}
In particular, when $n$ is even and $\epsilon=0$, $\omega(\bx)=0.5$, we say $\bx$ is ${\tt G}{\tt C}$-balanced. Over a binary alphabet, a sequence $\bu \in \{0, 1\}^n$ is called {\em balanced} if the number of ones in $\bu$ is ${\rm wt}_1(\bu)=n/2$. We observe that if $\bx$ is $\epsilon$-balanced then we also have $n/2-\epsilon n \le {\rm wt}_{{\tt A}}(\bx)+{\rm wt}_{{\tt T}}(\bx) \le n/2+\epsilon n.$
\vspace{0.5mm}

Most literature experiments used DNA sequences whose ${\tt G}{\tt C}$-content is close to $50\%$, and the maximum run is of length at most six \cite{erlich:2017,hec:2019, chee:tandem,caikui:ep}.  
Prior art coding techniques, that simultaneously enforce the RLL constraint and the ${\tt G}{\tt C}$-content constraint for arbitrary values of $\ell$ and $\epsilon$, were proposed in \cite{thanh:dna, xhe:2022}. Due to the unique structure of the alphabet symbols in composite DNA, it is required to develop new coding methods, specifically targeting the composite DNA method. 

\subsection{Constrained Coding for Composite DNA Synthesis} 


\begin{example}\label{exp1} 
Consider $\ell=3, \epsilon=0.1$, and a data sequence $\bx={\tt A} {\tt C} {\tt M} {\tt C} {\tt M} {\tt A} {\tt T} {\tt A}$ over the alphabet $\Sigma=\{{\tt A}, {\tt T}, {\tt C}, {\tt G}, {\tt M}\}$, where the letter ${\tt M}$ is a mixture of two nucleotides: ${\tt A}$ and ${\tt C}$. Observe that, regardless of the mixing ratio in letter ${\tt M}$, the synthesis machine produces four different sequences from $\bx$ and a huge number of copies for each of these sequences: 
\begin{align*}
\bx^{(1)} &= {\tt A} {\tt C} {\tt A} {\tt C} {\tt A} {\tt A} {\tt T} {\tt A}, \text{  }  \bx^{(2)} = {\tt A} {\tt C} {\tt A} {\tt C} {\tt C} {\tt A} {\tt T} {\tt A}, \\
\bx^{(3)}  &= {\tt A} {\tt C} {\tt C} {\tt C} {\tt A} {\tt A} {\tt T} {\tt A}, \text{  }  \bx^{(4)} = {\tt A} {\color{red}{{\tt C} {\tt C} {\tt C} {\tt C}}} {\tt A} {\tt T} {\tt A}.
\end{align*}
The sequence $\bx^{(4)} $ has a run of length 4 (as highlighted in red). Therefore, $\bx^{(4)} $ is not $3$-RLL. In addition, we also verify that the first sequence $\bx_1$ is not $\epsilon$-balanced for $\epsilon=0.1$. 

On the other hand, consider $\by={\tt A} {\tt C} {\tt M} {\tt G} {\tt G} {\tt M} {\tt T} {\tt A}$. We verify that all four different sequences, that can be obtained from $\by$ during the synthesis process, are both $\ell$-RLL and $\epsilon$-balanced for $\ell=3$ and $\epsilon=0.1$ (see Figure~\ref{fig1}). In this work, we design efficient constrained codes that only accept those data sequences having the same property as sequence $\by$. 
\end{example}

\begin{figure}[h!]
\begin{center}
\includegraphics[width=8.25cm]{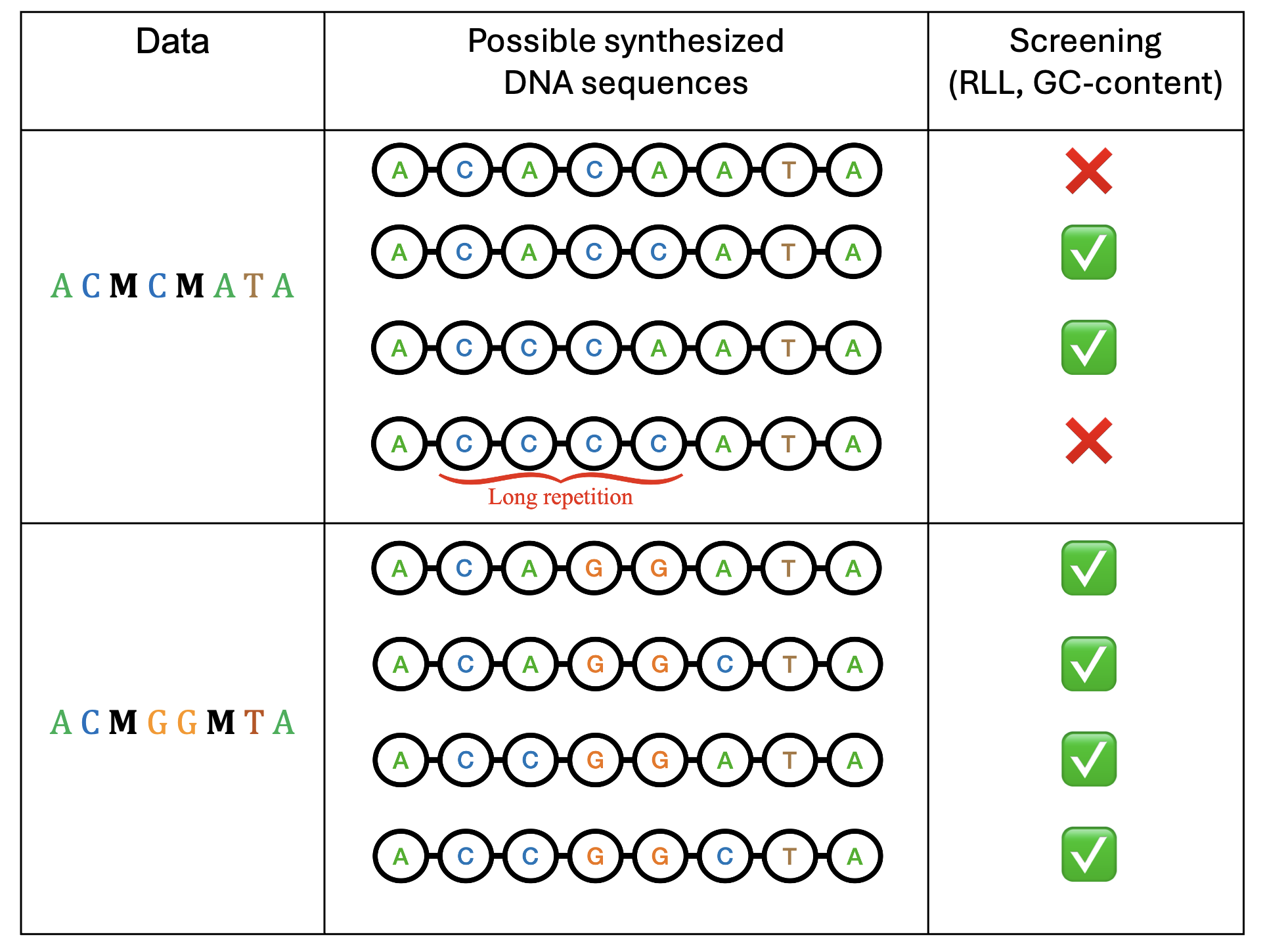}
\end{center}
\caption{Possible synthesized DNA sequences from two data sequences of length $8$ in composite DNA alphabet. Given $\ell=3$ and $\epsilon=0.1$, i.e. the total number of ${\tt G}$ and ${\tt C}$ is within $[3,5]$.} \label{fig1}
\end{figure}

We use $\Sigma_k=\{{\tt A}, {\tt T}, {\tt C}, {\tt G}\} \cup \{M_1, M_2, \ldots M_k\}$ to denote the alphabet used in composite DNA, including four standard letters in $\Sigma_0=\{{\tt A}, {\tt T}, {\tt C}, {\tt G}\}$ and $k$ composite letters $M_{1}, \ldots M_k$. 

\begin{definition} Given an alphabet $\Sigma_k$ and a sequence $\bx \in \Sigma_k^n$, let $\B(\bx; \Sigma_k)$ denote the set of all DNA sequences of length $n$ over the alphabet $\Sigma_0=\{{\tt A}, {\tt T}, {\tt C}, {\tt G}\}$, that can be obtained from $\bx$ during the synthesis process. 
\begin{enumerate}[(i)]
\item Given $\ell>0$, a code $\C\subseteq \Sigma_k^n$ is said to be an $\ell$-RLL code if and only if for all $\bx\in \C$ and all $\by\in \B(\bx; \Sigma_k)$ we have $\by$ is $\ell$-RLL. 
\item Given $\epsilon\in[0,0.5]$, a code $\C\subseteq \Sigma_k^n$ is said to be an $\epsilon$-balanced code if and only if for all $\bx\in \C$ and all $\by\in \B(\bx; \Sigma_k)$, we have $\by$ is $\epsilon$-balanced. 
\item Given $\ell>0, \epsilon\in[0,0.5]$, a code $\C\subseteq \Sigma_k^n$ is said to be an $(\ell,\epsilon)$-constrained code if and only if for all $\bx\in \C$ and all $\by\in \B(\bx; \Sigma_k)$, we have $\by$ is $\ell$-RLL and $\by$ is $\epsilon$-balanced. 
\end{enumerate}
\end{definition}

We use $\mathcal{A}(n,\ell;\Sigma_k)$ to denote the cardinality of the largest $\ell$-RLL code. The channel capacity of the runlength-limited constraint, denoted by ${\bf cap}_{\ell;\Sigma_k}$, is computed as follows: 
\begin{align*}
{\bf cap}_{\ell; \Sigma_k}=\lim_{n\to\infty} \frac{\log_2 \mathcal{A}(n,\ell;\Sigma_k)}{n}. 
\end{align*}
Similarly, we define $\mathcal{A}(n,\epsilon;\Sigma_k)$, $\mathcal{A}(n,\ell,\epsilon;\Sigma_k)$, and compute the channel capacities ${\bf cap}_{\epsilon; \Sigma_k}, {\bf cap}_{\ell,\epsilon; \Sigma_k}$ as follows:
\begin{align*}
{\bf cap}_{\epsilon; \Sigma_k}&=\lim_{n\to\infty} \frac{\log_2 \mathcal{A}(n,\epsilon;\Sigma_k)}{n}, \text{ and } \\
{\bf cap}_{\ell,\epsilon; \Sigma_k}&=\lim_{n\to\infty} \frac{\log_2 \mathcal{A}(n,\ell,\epsilon;\Sigma_k)}{n}.
\end{align*}

The rest of the paper is organized as follows. For each constraint, we compute the capacity of the constrained channel and provide efficient constructions of codes that obey the constraint, particularly focusing on the parameters $\ell, \epsilon,$ and $\Sigma_k$ such that ${\bf cap}_{\ell; \Sigma_k}, {\bf cap}_{\epsilon; \Sigma_k}, {\bf cap}_{\ell,\epsilon; \Sigma_k}>2$.

\section{Coding for RLL Constraint in Composite DNA}\label{RLLsection}

\subsection{Computing the Channel Capacity} 

Given $\ell>0$ and an alphabet $\Sigma_k$, we first construct the {\em forbidden set} $\mathcal{F}(\ell;\Sigma_k)$, a finite set of substrings that must be avoided in all codewords. 
\begin{example}\label{sigma1}
Consider an alphabet $\Sigma_1= \{{\tt A}, {\tt T}, {\tt C}, {\tt G}, {\tt M}\}$, where ${\tt M}$ consists of a mixture of ${\tt A}$ and ${\tt C}$, we have: 
\begin{equation*}
\mathcal{F}(\ell,\Sigma_1) = \Big\{ \{{\tt A}, {\tt M}\}^{\ell+1},  \{{\tt C}, {\tt M}\}^{\ell+1}, {\tt T}^{\ell+1}, {\tt G}^{\ell+1} \Big\},
\end{equation*}
and its size is $|\mathcal{F}(\ell,\Sigma_1)|=(2\times 2^{\ell+1}-1)+1+1=2^{\ell+2}+1$. 
\end{example}

Given the forbidden set $\mathcal{F}(\ell;\Sigma_k)$, we view such a $\ell$-RLL code as a constrained code where the codewords can be generated by all paths in a labeled graph $ {\bf {G}} = ({\bf V}, {\bf E}) $, where each note is a substring of length $\ell$, i.e. $ {\bf V}= \Sigma_k^{\ell} $. In addition, there is an edge from a node $\bu$ to a node $\bv$ if and only if $\bu_{[2,\ell]}\equiv \bv_{[1,\ell-1]}$ and the substring $\bw$ of length $(\ell+1)$ where $\bw=u_1 \bu_{[2,\ell]} v_{\ell} \notin \mathcal{F}(\ell;\Sigma_k)$. Let ${\bf D}(\ell;\Sigma_k)$ be the adjacency matrix of ${\bf {G}}$. It is well known that, the capacity of the $\ell$-RLL constraint channel is computed by: 
\begin{equation*}
{\bf cap}_{\ell; \Sigma_k} = \log_2 \lambda_{\ell; \Sigma_k}, 
\end{equation*} 
where $\lambda_{\ell; \Sigma_k}$ is the largest real eigenvalue of the matrix ${\bf D}(\ell;\Sigma_k)$ (refer to \cite{shannon:1948, immink:book}). Therefore, It is possible to compute $\lambda_{\ell; \Sigma_k}$ in every case and thus find the capacity. Several results are computed and tabulated in Table~\ref{table1} and Table~\ref{table1'}. 

\begin{example}
Consider $\ell=1$, and $\Sigma_1= \{{\tt A}, {\tt T}, {\tt C}, {\tt G}, {\tt M}\}$, where ${\tt M}$ consists of a mixture of ${\tt A}$ and ${\tt C}$. The forbidden set is then 
\begin{equation*}
\mathcal{F} = \Big\{ {\tt A}^2, {\tt A}{\tt M}, {\tt M}{\tt A}, {\tt M}^2,  {\tt C}^2, {\tt C}{\tt M}, {\tt M}{\tt C}, {\tt T}^{2}, {\tt G}^{2} \Big\}, \text{ and } |\mathcal{F}|=9.
\end{equation*}
We then obtain $ {\bf V} = \{{\tt A}, {\tt T}, {\tt C}, {\tt G}, {\tt M}\}$, and the matrix ${\bf D}$:
\begin{equation*}
{\bf D} = 
\left[
\begin{array}{ccccc} 
0& 1& 1& 1& 0\\
1& 0& 1& 1& 1\\
1& 1& 0& 1& 0\\
1& 1& 1& 0& 1\\
0& 1& 0& 1& 0
\end{array} \right].\label{eq:D_ex}
\end{equation*} 
The largest real eigenvalue of ${\bf D}$ is $\lambda=3.323$, and hence, the channel capacity is $\log_2 \lambda=\log_2 3.323=1.733$. 
\end{example} 

\begin{table}[h]
    \centering
    \begin{tabular}{|c|c|c|c|c|c|c|}
        \hline
        $\ell$ & 1 & 2 & 3  & 4 & 5 & 6\\
        \hline
        (i) ${\tt M}: {\tt A}|{\tt C}$   & 1.733 & {\bf 2.170} & {\bf 2.271} & {\bf 2.303} & {\bf 2.315} &{\bf 2.319}\\
        \hline
        (ii) ${\tt M}: {\tt A}|{\tt T}|{\tt C}$ & 1.626 & {\bf 2.121} & {\bf 2.251} & {\bf 2.295} & {\bf 2.311} &{\bf 2.318}\\
        \hline
        (iii) ${\tt M}: {\tt A}|{\tt T}|{\tt C}|{\tt G}$ & 1.585 & {\bf 2.076} & {\bf 2.231} & {\bf 2.287} & {\bf 2.308} & {\bf 2.316}\\
        \hline
    \end{tabular}
    \caption{The channel capacity of the $\ell$-RLL constraint in composite DNA, using an alphabet $\Sigma_1= \{{\tt A}, {\tt T}, {\tt C}, {\tt G}, {\tt M}\}$. 
There are three subcases when the composite letter ${\tt M}$ consists of a mixture of (i) any two nucleotides, or (ii) any three nucleotides, or (iii) all four nucleotides. The channel capacity is strictly larger than 2 bits/symbol for all $\ell\ge 2$ (the cases are highlighted in bold).}\label{table1}
\end{table}

\begin{table}[h]
			\centering
			   \begin{tabular}{|c|c|c|c|c|c|c|}
				\hline
				$\ell$ & 1 & 2 & 3 & 4 & 5 & 6 \\ \hline
				${\tt M}: {\tt A}|{\tt T}, {\tt N}: {\tt C}|{\tt G}$ & 1.900 & 2.418 & 2.535 & 2.569 & 2.580 & 2.583 \\ \hline
				${\tt M}: {\tt A}|{\tt T}, {\tt N}: {\tt A}|{\tt G}$ & 1.918 & 2.392 & 2.512 & 2.554 & 2.571 & 2.579 \\ \hline
				${\tt M}: {\tt A}|{\tt T}, {\tt N}: {\tt A}|{\tt C}|{\tt G}$& 1.806 & 2.356 & 2.500 & 2.550 & 2.570 & 2.578 \\ \hline
				${\tt M}: {\tt A}|{\tt T}, {\tt N}: {\tt A}|{\tt T}|{\tt G}$ & 1.821 & 2.331 & 2.478 & 2.536 & 2.562 & 2.574 \\ \hline
				${\tt M}: {\tt A}|{\tt T}|{\tt C}, {\tt N}: {\tt A}|{\tt T}|{\tt G}$ & 1.694 & 2.289 & 2.465 & 2.532 & 2.560 & 2.573 \\ \hline
			\end{tabular}
			 \caption{The channel capacity of the $\ell$-RLL constraint in composite DNA, using an alphabet $\Sigma_2= \{{\tt A}, {\tt T}, {\tt C}, {\tt G}, {\tt M}, {\tt N}\}$.}\label{table1'}
		\end{table}

Given $\ell, \Sigma_k$, the problem of finding an explicit formula for $\mathcal{A}(n,\ell;\Sigma_k)$, i.e., the cardinality of the largest $\ell$-RLL code (or similarly, the problem of finding an explicit formula for the channel capacity ${\bf cap}_{\ell; \Sigma_k}$) is deferred to future work.

\subsection{Efficient Coding Schemes} 
In this section, we show that for certain values of $\ell, \Sigma_k$, a linear-time encoder (and a corresponding linear-time decoder) exists to encode $\ell$-RLL code with only a single redundancy symbol. In such cases, we then have $\mathcal{A}(n,\ell;\Sigma_k)\ge |\Sigma_k|^{n-1}$. Our encoding method is based on the {\em sequence replacement technique}, which has been widely applied in the literature (for example, refer to \cite{srt:1, srt:2, srt:3, srt:4}). Our construction is as follows. 
\vspace{1mm}

Given $\ell>0$ and an arbitrary alphabet $\Sigma_k$, we first construct the forbidden set $\mathcal{F}(\ell; \Sigma_k)$. 
\vspace{1mm}

\noindent{\bf $\ell$-RLL Encoder for Composite DNA: $\enc_{\ell; \Sigma_k}$} 
\vspace{1mm}

{\sc Input}: $\bx \in \Sigma_k^{n-1}$\\
{\sc Output}: $\bc=\enc_{\ell; \Sigma_k}(\bx) \in \Sigma_k^{n}$ such that $\bc$ contains no forbidden substring of length $(\ell+1)$ in $\mathcal{F}(\ell,\Sigma_k)$ \\[-2mm]
\vspace{1mm}

\noindent{\em Screening Step}. The encoder first appends `${\tt A}$' to the end of $\bx$, yielding an $n$-symbols sequence, $\by=\bx{\tt A}$. The encoder then checks $\by$, and if there is no forbidden substring of length $(\ell+1)$ in $\by$, the output is simply $\bc=\by=\bx{\tt A}$. Otherwise, it proceeds to the {\em replacement procedure}.
\vspace{1mm}

\noindent{\em Replacement Procedure}. Let the current word $\by=\bu {\bf f} \bv$, where, the prefix $\bu$ has no substring in $\mathcal{F}(\ell,\Sigma_k)$ and ${\bf f}$ is a forbidden substring, ${\bf f}\in \mathcal{F}(\ell,\Sigma_k)$. Suppose that ${\bf f}$ starts at position $p$, where $1\leq p\leq n-\ell$. The encoder removes ${\bf f}$ and updates the current word to be $\by=\bu\bv {\bf R} \alpha$, where the {\em pointer} ${\bf R} \alpha$ is used to represent the position $p$ and the removed substring ${\bf f}$ for unique decoding. In addition, 
\begin{enumerate}[(i)]
\item ${\bf R} \in \Sigma_k^{\ell}$,
\item $\alpha \in \Sigma_k \setminus \{{\tt A}\}$,
\end{enumerate}
Note that the number of unique combinations of the pointer ${\bf R} e$ equals $(\Sigma_k-1)\Sigma_k^{\ell}$. To make the replacement pointer an injective mapping, the following inequality must hold: 
\begin{align}
(n-\ell) |\mathcal{F}(\ell; \Sigma_k)| &\le (|\Sigma_k|-1)\times |\Sigma_k|^\ell, \text{ or } \nonumber \\
 n &\le \frac{(|\Sigma_k|-1)\times |\Sigma_k|^\ell}{|\mathcal{F}(\ell; \Sigma_k)|}+\ell. \label{UBmax}
\end{align}
The current word $\by=\bu\bv {\bf R} \alpha$ is of length $n$. If, after the replacement, $\by$ contains no forbidden substring, then the encoder outputs $\bc=\by$ as the codeword. Otherwise, the encoder repeats the replacement procedure for the current word $\by$ until all forbidden substrings have been removed. 
\begin{example}[Continuing from Example~\ref{sigma1}]  When $\Sigma_1= \{{\tt A}, {\tt T}, {\tt C}, {\tt G}, {\tt M}\}$, where ${\tt M}$ consists of a mixture of ${\tt A}$ and ${\tt C}$, we have $|\mathcal{F}(\ell,\Sigma_1)|=2^{\ell+2}+1$. Our encoder works for 
\begin{equation*}
n\le \frac{4\times 5^\ell}{2^{\ell+2}+1}+\ell.
\end{equation*} 
Particularly, when $\ell=6$, the bound implies that our encoder uses only one redundant symbol for all $n\le250$. More results are computed and tabulated in Table~\ref{compare}.
\end{example}

\begin{theorem}\label{encoder:defined}
Our encoder $\enc_{\ell; \Sigma_k}$ is well-defined. 
In other words, the replacement procedure is guaranteed to terminate. 
\end{theorem}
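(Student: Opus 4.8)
The plan is to argue termination by exhibiting a potential function on the current word $\by$ that strictly decreases with every application of the replacement procedure while staying bounded below, so that only finitely many replacements can occur. The natural candidate is (a weighted version of) the number of forbidden substrings of length $\ell+1$ in $\by$, or more robustly the position-weighted quantity $\Phi(\by) = \sum_{i} w_i \cdot [\by_{[i,i+\ell]} \in \mathcal{F}(\ell;\Sigma_k)]$ for a suitable increasing weight sequence $w_i$. The key structural observation is that each replacement step takes the leftmost forbidden substring $\bf f$, which starts at some position $p$ with $p \le n-\ell$ (this is exactly where inequality \eqref{UBmax} is invoked: the pointer $\bf R \alpha$ can only encode positions up to $n-\ell$, and the injectivity bound guarantees a free pointer is always available), deletes it, and reinserts an $(\ell+1)$-symbol pointer at the very end. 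The prefix $\bu$ was already clean by construction, so all forbidden substrings of the new word lie strictly to the right of position $p$; in particular the newly created word $\by = \bu\bv\bf R\alpha$ has its leftmost forbidden substring (if any) starting at a position strictly greater than $p$.

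First I would make precise that the "frontier" — the starting position of the leftmost forbidden substring — is strictly increasing across iterations. Deleting $\bf f$ of length $\ell+1$ shifts everything after it left by $\ell+1$, but the clean prefix $\bu$ has length $p-1 \geq 0$ and, crucially, $\bu$ together with the first $\ell$ symbols of $\bv$ cannot contain a forbidden substring: any such substring would have to overlap the region formerly occupied by $\bf f$'s left part and would contradict minimality of $p$ (one must check that splicing out $\bf f$ does not create a forbidden window spanning the cut — this needs the fact that a forbidden word is determined by its multiset-of-allowed-bases pattern, and $\bu$'s last $\ell$ symbols already extended legally). So after the step, no forbidden substring starts at a position $\le p$; since positions are integers bounded by $n-\ell$, the frontier can advance at most $n-\ell$ times before the word is clean, and the procedure halts.

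The main obstacle — and the step I would spend the most care on — is the boundary/splicing analysis: verifying that excising $\bf f = \by_{[p,p+\ell]}$ and concatenating $\bu$ with $\bv$ cannot manufacture a new forbidden substring straddling the splice point that starts at position $\le p-1$, which would break monotonicity of the frontier. The clean-prefix invariant handles substrings entirely inside $\bu$; the danger is a window $[j, j+\ell]$ with $j \le p-1 < j+\ell$, i.e. one that used to see part of $\bf f$ and now sees the head of $\bv$. Here I would use the definition of $\mathcal{F}(\ell;\Sigma_k)$: a forbidden word is a length-$(\ell+1)$ string all of whose symbols draw from a common base (e.g. every symbol in $\{{\tt A},{\tt M}\}$, or all equal to ${\tt T}$). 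Since $\by_{[j,j+\ell]}$ before deletion was not forbidden (it started at $j \le p-1$, in the clean region's reach), it contained a symbol incompatible with the others; after deletion the surviving portion $\by_{[j,p-1]}$ still contains that incompatibility or else the argument reduces to the sub-window $\bu_{[\,\cdot\,]}$ already known clean — one shows the spliced window either retains a witness of non-forbiddenness or coincides with a window of $\bu$, hence is not forbidden. Once this splice lemma is in hand, combined with the always-available pointer from \eqref{UBmax}, termination follows immediately, and (as a bonus, though not required by the statement) one gets the bound that at most $n-\ell$ replacement rounds occur.
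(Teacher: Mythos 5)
Your argument rests on a ``splice lemma'' --- that excising ${\bf f}$ and concatenating $\bu$ with $\bv$ cannot create a forbidden substring starting at a position $\le p-1$ --- and this lemma is false. Take $\Sigma_1=\{{\tt A},{\tt T},{\tt C},{\tt G},{\tt M}\}$ with ${\tt M}$ a mixture of ${\tt A}$ and ${\tt C}$, and a current word $\by=\bu\,{\bf f}\,\bv$ in which $\bu$ ends with ${\tt T}^{a}$ for some $1\le a\le \ell$, ${\bf f}={\tt A}^{\ell+1}\in\{{\tt A},{\tt M}\}^{\ell+1}$, and $\bv$ begins with ${\tt T}^{b}$ where $a+b\ge \ell+1$. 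Here $\bu$ is clean, every window straddling $\bu$ and ${\bf f}$ contains both a ${\tt T}$ and an ${\tt A}$ and hence lies in no forbidden pattern, so ${\bf f}$ is indeed the leftmost forbidden factor; yet after the excision the spliced word $\bu\bv$ contains the run ${\tt T}^{a+b}$, a forbidden substring beginning at position $p-a<p$. The frontier moves strictly to the \emph{left}. Your justification (``the surviving portion still contains that incompatibility, or else the window coincides with a window of $\bu$'') fails exactly here: the witness of non-forbiddenness of the old window may lie entirely inside the deleted block ${\bf f}$, and the spliced window is a genuinely new string, forbidden of a \emph{different type} than ${\bf f}$. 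A second, independent problem is that even a non-decreasing frontier would not yield termination: each replacement appends a fresh pointer ${\bf R}\alpha$ whose junction with $\bv$ (or whose own content) can create new forbidden substrings, so forbidden factors can keep being regenerated and a convergent frontier alone does not bound the number of iterations. The potential-function route therefore collapses at both ends.

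The paper's proof avoids local combinatorics entirely. It uses the injectivity of the replacement step guaranteed by \eqref{UBmax}: from the appended pointer ${\bf R}\alpha$ one recovers both the position $p$ and the excised word ${\bf f}$, so every word of length $n$ has at most one predecessor under the replacement map, and of course at most one successor. The initial word $\bx{\tt A}$ ends in ${\tt A}$, whereas every image of a replacement ends in some $\alpha\ne{\tt A}$, so the start node has in-degree zero; a walk in a finite functional graph with in-degree at most one that begins at an in-degree-zero node cannot enter a cycle, hence the procedure terminates. This global ``no-cycle via unique decodability'' argument is the standard one for sequence replacement, and it is what you need here; if you wish to keep a combinatorial flavour you would have to prove a substantially more delicate invariant than the splice lemma, which, as shown above, does not hold.
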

\begin{proof} 
We model the encoder as a walk on a directed graph, where nodes represent sequences of length $n$ and edges represent the replacement routine, i.e. there is an edge from a sequence $\by_i$ to a sequence $\by_{j}$ if $\by_i$ has a forbidden substring and a replacement is performed, resulting a sequence $\by_{j}$. It is easy to see that the {\em out-degree} of all nodes is at most one (as the replacement routine is injective). In addition, if there is an edge from a sequence $\by_i$ to a sequence $\by_{j}$, then from $\by_j$, it is uniquely decodable to obtain $\by_i$. In other words, the {\em in-degree} of all nodes is also at most one. The number of nodes is finite (bounded above by $\Sigma_k^n$). Suppose that the replacement procedure does not terminate, i.e., a {\em loop} (or {\em cycle}) exists. Suppose from a data $\bx$, we have a walk $\by=\bx{\tt A} \to \by_1 \to ... \to (\by_i \to \by_{i+1} \to \ldots \to \by_j \to \by_i)$. Since $\by$ ends with ${\tt A}$ while all other nodes end with a different symbol (according to the replacement procedure), we must have $j> i \ge 1$, and hence, we assume that all nodes in $\bS=\{\by, \by_1, \ldots \by_{i-1}\}$ do not belong to the loop. Such a set $\bS$ is a non-empty set. In a special case when $i=1$, then $\by_{i-1}\equiv \by$ and $\bS=\{\by\}$. We illustrate the scenario as follows. 
\begin{figure}[h!]
\begin{center}
\includegraphics[width=7.5cm]{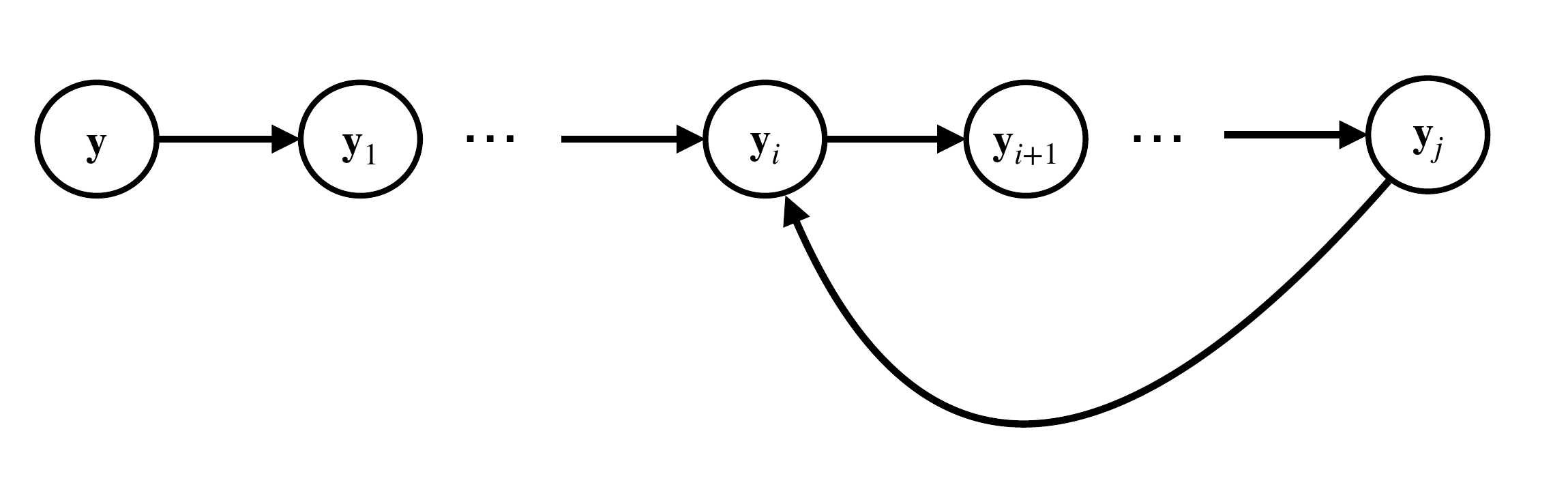}
\end{center}
\end{figure}

\noindent We then have the in-degree of $\by_i$ is at least two: $\by_{i-1} \to \by_i$ and $\by_{j}\to \by_i$, where $j>i>i-1$. Consequently, from $\by_i$, the decoder obtains two different sequences $\by_j$ and $\by_{i-1}$. We have a contradiction. 
\end{proof}


\begin{lemma}
The average number of replacement iterations done by the $\ell$-RLL encoder $\enc_{\ell; \Sigma_k}$ is at most $|\Sigma_k| = O(1)$.
\end{lemma}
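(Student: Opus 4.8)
The plan is to bound the \emph{total} number of replacement iterations summed over all $|\Sigma_k|^{n-1}$ possible inputs by $(|\Sigma_k|-1)|\Sigma_k|^{n-1}$; dividing by the number of inputs then yields an average of at most $|\Sigma_k|-1<|\Sigma_k|$. First I would reuse the graph-theoretic picture from the proof of Theorem~\ref{encoder:defined}: the nodes are the length-$n$ sequences over $\Sigma_k$, each replacement step is an edge, every node has in-degree and out-degree at most one, and there are no cycles. For a fixed input $\bx\in\Sigma_k^{n-1}$, write the walk produced by $\enc_{\ell;\Sigma_k}$ as $\by^{(0)}=\bx{\tt A},\by^{(1)},\ldots,\by^{(T(\bx))}=\bc$, so that $T(\bx)$ is exactly the number of replacement iterations performed on $\bx$.

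The crux of the argument is to show that the map $\Phi$ sending a pair $(\bx,i)$ with $1\le i\le T(\bx)$ to the intermediate sequence $\by^{(i)}$ is injective. The key structural fact is that the replacement procedure always sets the last symbol of the updated word to some $\alpha\in\Sigma_k\setminus\{{\tt A}\}$, so every sequence in the image of $\Phi$ ends in a symbol other than ${\tt A}$; equivalently, any sequence ending in ${\tt A}$ (in particular every appended input word $\bx'{\tt A}$) has in-degree zero in the graph. Now suppose $\by^{(i)}$ arising from $\bx_1$ equals $\by^{(j)}$ arising from $\bx_2$; call this common sequence $\bz$. Since every node has in-degree at most one, tracing predecessors backward from $\bz$ is deterministic and must coincide with both chains $\by^{(i)},\by^{(i-1)},\ldots,\by^{(0)}=\bx_1{\tt A}$ and $\by^{(j)},\ldots,\by^{(0)}=\bx_2{\tt A}$, each of which terminates at a node of in-degree zero. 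As $\bx_1{\tt A}$ and $\bx_2{\tt A}$ both end in ${\tt A}$ they are both such terminal nodes of the unique backward chain, forcing $\bx_1{\tt A}=\bx_2{\tt A}$ and $i=j$. Hence $\Phi$ is injective.

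To finish, I would count the image: it is contained in the set of length-$n$ sequences over $\Sigma_k$ whose last symbol is not ${\tt A}$, which has size $(|\Sigma_k|-1)|\Sigma_k|^{n-1}$. Therefore
\[
\sum_{\bx\in\Sigma_k^{n-1}} T(\bx) \;=\; |\mathrm{dom}(\Phi)| \;\le\; (|\Sigma_k|-1)\,|\Sigma_k|^{n-1},
\]
and dividing by the number of inputs $|\Sigma_k|^{n-1}$ gives an average of at most $|\Sigma_k|-1\le|\Sigma_k|=O(1)$.

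The main obstacle is the injectivity of $\Phi$: a single intermediate word $\bz$ could a priori be visited on the walks of two distinct inputs, and ruling this out needs both the in-degree-$\le 1$ property already established in Theorem~\ref{encoder:defined} and the observation that the ${\tt A}$-terminated starting words can never be regenerated by a replacement, so they occur only at the very beginning of any walk. Once $\Phi$ is injective, the remaining counting is routine bookkeeping.
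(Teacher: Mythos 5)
Your proof is correct and follows essentially the same route as the paper's: both arguments use the in-degree/out-degree at most one property of the replacement graph to conclude that the walks generated by distinct inputs are disjoint, and then bound the total number of replacement iterations by counting nodes. You add two worthwhile refinements --- you spell out explicitly why the walks cannot merge (the appended words $\bx{\tt A}$ are the only in-degree-zero nodes on any backward chain), and you observe that every post-replacement word ends in a symbol other than ${\tt A}$, which tightens the count to $(|\Sigma_k|-1)|\Sigma_k|^{n-1}$ and the average to $|\Sigma_k|-1$, slightly better than the paper's $|\Sigma_k|$.
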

\begin{proof}
As discussed in Remark~\ref{encoder:defined}, the in-degree (and out-degree) of all nodes in such a graph is at most one. Consequently, given two distinct data sequences $\bx_1$ and $\bx_2$, two paths generated by $\bx_1$ and $\bx_2$ are disjoint. The sum of replacement iterations the encoder does for all data sequences is bounded above by $|\Sigma_k|^n$, and hence, the average number of replacement iterations is at most $|\Sigma_k|^n /| \Sigma_k|^{n-1} = |\Sigma_k|=O(1)$. 
\end{proof}

The corresponding decoder $\dec_{\ell; \Sigma_k}$ is as follows.  
\vspace{1mm}

\noindent{\bf $\ell$-RLL Decoder for Composite DNA: $\dec_{\ell; \Sigma_k}$} 

{\sc Input}: $\bc \in \Sigma_k^{n}$,  $\bc=\enc_{\ell; \Sigma_k}(\bx)$ for some unknown $\bx$ \\
{\sc Output}: $\bx=\dec_{\ell; \Sigma_k}(\bc) \in \Sigma_k^{n-1}$ \\[-2mm]

\noindent{\em Decoding Procedure}. The decoder checks from the right to the left. If the last symbol is `${\tt A}$', the decoder simply removes the last symbol and identifies the first $n-1$ symbols are source data $\bx$. On the other hand, if the last symbol is not `${\tt A}$', the decoder takes the suffix of length $(\ell+1)$, identifies it as the pointer ${\bf R} \alpha$ for some $\alpha\neq {\tt A}$ and ${\bf R}\in\Sigma_k^{\ell}$, representing a unique substring ${\bf f}\in\mathcal{F}(\ell,\Sigma_k)$. The decoder then adds back the substring ${\bf f}$ accordingly. It terminates when the last symbol of the sequence is `${\tt A}$'.

\begin{remark}\label{remark:extend}
Suppose the codewords' length is larger than the proposed upper bound in \eqref{UBmax}. In that case, one can use the concatenation coding method to divide the data into subwords whose length is less than the upper bound in \eqref{UBmax}. The subwords are encoded in parallel and then concatenated to form a codeword. 

\end{remark}


\begin{table}[h!]
\centering 
 \begin{tabular}{ |P{0.5cm}|| P{3cm}|P{3cm}|}
 \hline
 $\ell$ & $\Sigma_1=\{{\tt A}, {\tt T}, {\tt C}, {\tt G}, {\tt M_1}\}$, where ${\tt M_1}: {\tt A}|{\tt C}$ &  $\Sigma_2=\{{\tt A}, {\tt T}, {\tt C}, {\tt G}, {\tt M_1}, {\tt M_2}\}$, where ${\tt M_1}: {\tt A}|{\tt T}, {\tt M_2}: {\tt C}|{\tt G}$\\[1ex]
 \hline
 3  &   $n\le 19$ & $n\le 20$\\
 4  &   $n\le 43$ &  $n\le55$\\
 5  &    $n\le102$ &  $n\le158$ \\
 6 & $n\le250$ &  $n\le463$ \\
\hline
\end{tabular}
\caption{$\ell$-RLL Encoder with One Redundant Symbol.}
\label{compare}
\end{table}


\section{Coding for ${\tt G}{\tt C}$-Content Constraint} 

\subsection{Channel Capacity and Efficient Encoding Schemes}\label{epsilon-subsection}

From Section III, we observe that for constant $\ell>0$, ${\bf cap}_{\ell; \Sigma_k} < \log_2 |\Sigma_k|$. In this section, we show that for several values of $\epsilon$ and $\Sigma_k$, there is no rate loss, i.e. the channel capacity is ${\bf cap}_{\epsilon; \Sigma_k}=\log_2 |\Sigma_k|$.

\begin{definition}
For a positive integer $N$, the {\em DNA-representation} of $N$ is the replacement of symbols in the quaternary representation of $N$ as follows:  $0  \leftrightarrow {\bf A}, 1  \leftrightarrow {\bf T}, 2  \leftrightarrow {\bf C}, \text{ and } 3  \leftrightarrow {\bf G}.$
\end{definition} 

\begin{example}
If $N=100$, the quaternary representation of length 4 of $N$ is $1210$, hence, the DNA-representation of $N$ is ${\bf T}{\bf C}{\bf T}{\bf A}$. Similarly, when $N=55$, the quaternary representation of length 4 of $N$ is $0313$, thus the DNA-representation of $N$ is ${\bf A}{\bf G}{\bf T}{\bf G}$.
\end{example}

\begin{theorem}\label{alphabet5} 
Consider an arbitrary composite alphabet of size 5: $\Sigma_1=\{{\tt A}, {\tt T}, {\tt C}, {\tt G}, {\tt M}\}$, where ${\tt M}$ can be any mixture of the four DNA nucleotides. For any $\epsilon\ge 0.1$, the channel capacity is ${\bf cap}_{\epsilon; \Sigma_1}=\log_2 5$. In addition, there exists a linear-time encoder (and a corresponding decoder) to encode $\epsilon$-balanced code in $\Sigma_1^n$ with redundancy at most $O(\log_4 n)$ symbols. 
\end{theorem}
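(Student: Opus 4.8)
The plan is to reduce the composite‐alphabet balancing problem to the classical binary/quaternary balancing problem by exploiting the structure of $\B(\bx;\Sigma_1)$. First I would analyze what the $\epsilon$-balanced condition actually demands: a codeword $\bx\in\Sigma_1^n$ must have \emph{every} $\by\in\B(\bx;\Sigma_1)$ satisfy $n/2-\epsilon n\le {\rm wt}_{\tt G}(\by)+{\rm wt}_{\tt C}(\by)\le n/2+\epsilon n$. Write $g(\bx)$ for the number of ``hard'' ${\tt G}$ or ${\tt C}$ positions (standard letters in $\{{\tt G},{\tt C}\}$), write $a(\bx)$ for the number of ``hard'' ${\tt A}$ or ${\tt T}$ positions, and let $m(\bx)$ be the number of ${\tt M}$-positions. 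Since ${\tt M}$ is a fixed mixture of the four nucleotides, each ${\tt M}$-position contributes, across $\B(\bx;\Sigma_1)$, \emph{either} a ${\tt G}/{\tt C}$ base \emph{or} an ${\tt A}/{\tt T}$ base depending on which constituent of the mixture is realized, so the count ${\rm wt}_{\tt G}(\by)+{\rm wt}_{\tt C}(\by)$ ranges over $[\,g(\bx)+m_0,\ g(\bx)+m_1\,]$ where $m_0,m_1$ are the min/max number of ${\tt M}$-positions that can realize a ${\tt G}/{\tt C}$ base; in the worst case $m_0=0$ and $m_1=m(\bx)$ (unless ${\tt M}$ is a mixture using only $\{{\tt G},{\tt C}\}$ or only $\{{\tt A},{\tt T}\}$, which only helps). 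Hence the $\epsilon$-balanced requirement is equivalent to the simultaneous bounds $g(\bx)\ge n/2-\epsilon n$ and $g(\bx)+m(\bx)\le n/2+\epsilon n$, i.e.
\begin{equation}
n/2-\epsilon n \ \le\ g(\bx)\quad\text{and}\quad g(\bx)+m(\bx)\ \le\ n/2+\epsilon n. \label{plan:cond}
\end{equation}
Adding these shows we need $m(\bx)\le 2\epsilon n$, which is why the threshold $\epsilon\ge 0.1$ appears: it leaves room for a constant fraction of ${\tt M}$-symbols.

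For the capacity claim, I would count sequences satisfying \eqref{plan:cond} directly. Choose the set $S$ of ${\tt M}$-positions with $|S|=m$, choose which of the remaining $n-m$ positions are ${\tt G}/{\tt C}$ (there are $g$ of them, each then freely one of two letters) and which are ${\tt A}/{\tt T}$ (the other $n-m-g$, each freely one of two letters). Taking, say, $m=\lfloor 2\epsilon n\rfloor$ (a constant fraction, valid since $\epsilon\ge 0.1$ so $2\epsilon\le 1$ only fails when $\epsilon\ge 0.5$, in which case the constraint is vacuous) and $g$ near $n/2-\epsilon n$, the number of such codewords is at least $\binom{n}{m}\binom{n-m}{g}2^{n-m}$; a Stirling estimate gives exponential growth rate $\en_2(2\epsilon)+(1-2\epsilon)\bigl(\en_2(\tfrac{g}{n-m})+1\bigr)+\,$(the $m$-positions contribute nothing to the base count but we may as well also let them be the single composite letter, contributing $0$ extra bits — so actually the right count lets the ${\tt M}$-positions be chosen freely as the letter ${\tt M}$). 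The cleaner route: show the rate is exactly $\log_2 5$ by exhibiting, for each $\epsilon\ge 0.1$, a subset of size $5^{n-o(n)}$. I would do this via the encoder below, whose redundancy is $O(\log_4 n)=o(n)$, which immediately forces ${\bf cap}_{\epsilon;\Sigma_1}\ge \log_2 5 - o(1)$, and the matching upper bound ${\bf cap}_{\epsilon;\Sigma_1}\le\log_2|\Sigma_1|=\log_2 5$ is trivial. So the capacity statement follows from the construction.

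For the encoder, the idea is a Knuth-type balancing adapted to the composite setting. Given $\bx\in\Sigma_1^{n}$ (minus the redundancy), first I would \emph{segregate} the ${\tt M}$-symbols: the encoder scans $\bx$ and, using a short prefix/suffix flag of length $O(\log_4 n)$ recording how many ${\tt M}$'s there are and a balanced-encoding of that count, it suffices to control the $\{{\tt G},{\tt C}\}$-count on the non-${\tt M}$ part. Concretely: treat each non-${\tt M}$ position as carrying a bit ($1$ for ${\tt G}/{\tt C}$, $0$ for ${\tt A}/{\tt T}$) plus a ``free'' bit choosing between the two letters of its class; apply Knuth's prefix-flip: there is an index $i$ such that flipping the class bits of the first $i$ non-${\tt M}$ positions makes the $\{{\tt G},{\tt C}\}$-count equal to $\lceil (n-m)/2\rceil$, which by \eqref{plan:cond} is well inside $[n/2-\epsilon n,\ n/2+\epsilon n]$ provided $m\le 2\epsilon n$; record $i$ in $O(\log_4 n)$ balanced redundancy symbols. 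If instead $m>2\epsilon n$, no codeword with that many ${\tt M}$'s can be $\epsilon$-balanced, so the encoder must \emph{avoid} producing such inputs — this is handled by first passing $\bx$ through a simple prefixing scheme that guarantees $m\le 2\epsilon n$ at a cost of $o(n)$ redundancy (e.g.\ partition into blocks and forbid blocks that are all-${\tt M}$, or more simply use the fact that we are free to \emph{choose} the message-to-$\Sigma_1^n$ map so that the ${\tt M}$-density is controlled; the standard trick is to interleave a sparse ``no ${\tt M}$'' marker). The decoder reads the flag, undoes the prefix flip, and recovers $\bx$.

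The main obstacle I expect is the interaction between the two requirements in \eqref{plan:cond}: the Knuth flip controls $g(\bx)$ but the second inequality couples $g$ with $m$, so one cannot choose $g$ freely when $m$ is large. The resolution — and the technically delicate part — is to show that one can simultaneously (a) cap the number of ${\tt M}$-symbols at $2\epsilon n$ with only $O(\log_4 n)$ redundancy and (b) then balance the remainder; getting the redundancy down to $O(\log_4 n)$ for step (a), rather than the naive $\Theta(n)$, is where the argument needs care, and is presumably why the theorem states $\epsilon\ge 0.1$ (so $2\epsilon n$ is a comfortably large fraction and a short ``${\tt M}$-budget'' encoding suffices) rather than arbitrary $\epsilon>0$.
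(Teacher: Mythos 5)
Your overall plan coincides with the paper's: characterize the $\epsilon$-balance requirement through the hard ${\tt G}/{\tt C}$-count $g(\bx)$ and the ${\tt M}$-count $m(\bx)$, cap $m(\bx)$ at $2\epsilon n$, Knuth-balance the remaining standard letters with a class-swapping flip recorded in $O(\log_4 n)$ symbols, and read the capacity off the $o(n)$ redundancy together with the trivial upper bound $\log_2 5$. Your flipping rule is the same as the paper's ${\tt A}\leftrightarrow{\tt C}$, ${\tt T}\leftrightarrow{\tt G}$ map, and the sufficient condition the paper enforces (${\rm wt}_{\tt M}(\bx)\le 0.2n$ together with exact balance of the standard letters) is a special case of your two inequalities $g(\bx)\ge n/2-\epsilon n$ and $g(\bx)+m(\bx)\le n/2+\epsilon n$.

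The gap is in your step (a), capping the number of ${\tt M}$'s. You correctly flag this as the delicate point, but none of the mechanisms you sketch achieves it: forbidding all-${\tt M}$ blocks of length $b$ only bounds the ${\tt M}$-density by $(b-1)/b$, and interleaving sparse ``no-${\tt M}$'' markers likewise cannot force the density down to $0.2$; both leave inputs with, say, $90\%$ ${\tt M}$'s essentially untouched, and for such inputs no choice of flip index can satisfy your constraints. The paper's resolution is a one-symbol pigeonhole swap: let $\alpha$ be a least frequent symbol of $\Sigma_1$ in $\bx$ (so ${\rm wt}_\alpha(\bx)\le n/5$), exchange every occurrence of ${\tt M}$ with $\alpha$ and of $\alpha$ with ${\tt M}$, and append $\alpha$ so the decoder can undo the swap. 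After the swap ${\rm wt}_{\tt M}\le n/5=0.2n\le 2\epsilon n$, which is exactly where the threshold $\epsilon\ge 0.1$ comes from (namely $2\epsilon\ge 1/|\Sigma_1|$), not merely from $2\epsilon n$ being ``comfortably large.'' With that ingredient supplied, the rest of your argument --- the prefix flip on the standard letters, $2\lceil\log_4 n\rceil$ symbols for the flip index and its complement, and the capacity conclusion --- goes through as in the paper.
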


\begin{proof}
Consider the set $\bS$ consisting of all sequences $\bx\in \Sigma_1^n$ such that:
\begin{align}
 &{\rm wt}_{{\tt M}}(\bx) \le 0.2n \le 2\epsilon n \text{ for all } \epsilon\ge 0.1, \text{ and } \label{parity-constraint} \\
&{\rm wt}_{\tt A}(\bx)+{\rm wt}_{\tt T}(\bx) =  {\rm wt}_{\tt C}(\bx)+{\rm wt}_{\tt G}(\bx) = \frac{n-{\rm wt}_{{\tt M}}(\bx)}{2} \label{knuth-constraint}. 
\end{align}
We now show that $\bS$ is an $\epsilon$-balanced code. Consider an arbitrary sequence $\bx\in\bS$. For all $\by\in\B(\bx;\Sigma_1)$, i.e. $\by\in\Sigma_0^n$ can be obtained from $\bx$ during the synthesis, we have 
\begin{align*}
{\rm wt}_{\tt C}(\by)+{\rm wt}_{\tt G}(\by) &\ge {\rm wt}_{\tt C}(\bx)+{\rm wt}_{\tt G}(\bx) \ge 0.5n-0.1n =0.4n,\\ 
{\rm wt}_{\tt C}(\by)+{\rm wt}_{\tt G}(\by) &\le {\rm wt}_{\tt C}(\bx)+{\rm wt}_{\tt G}(\bx)+ {\rm wt}_{\tt M}(\bx) \\
&= \frac{n+{\rm wt}_{\tt M}(\bx)}{2} \\
&\le 0.5n+0.1n = 0.6n.   
\end{align*}
Therefore, for all $\epsilon \ge 0.1$, $\by$ is $\epsilon$-balanced regardless the choice of composite letter ${\tt M}$. Thus, $\bS$ is an $\epsilon$-balanced code. 

It suffices to show that there exists an efficient encoder for codewords in $\bS$ with redundancy at most $O(\log_4 n)$ symbols. Consequently, it implies that the channel capacity is ${\bf cap}_{\epsilon; \Sigma_1}=\log_2 5$. While one parity symbol is sufficient to enforce the constraint \eqref{parity-constraint}, we need $O(\log_4 n)$ symbols to enforce the constraint \eqref{knuth-constraint}, by using the Knuth's balancing technique \cite{alon:1988, knuth:1986}. We next define the one-to-one flipping rule $f$: ${\tt A} \to {\tt C}, {\tt C} \to {\tt A}, {\tt T} \to {\tt G}, {\tt G} \to {\tt T}.$
For a sequence $\bx\in\Sigma_1^n$ and $1\le t\le n$, let $f_t(\bx)$ denote the sequence obtained by flipping the first $t$ symbols in $\bx$ according to $f$. For example, if $\bx= {\tt M}{\tt C}{\tt M}{\tt G}{\tt A}{\tt T}$ then $f_4(\bx)={\color{blue}{{\tt M}{\tt A}{\tt M}{\tt T}}}{\tt A}{\tt T}$ and $f_5(\bx)={\color{blue}{{\tt M}{\tt A}{\tt M}{\tt T}{\tt C}}}{\tt T}$. Our encoder is constructed as follows. 
\vspace{1mm}

\noindent{\bf $\epsilon$-balanced Encoder for Composite DNA: $\enc_{\epsilon; \Sigma_1}$} 

{\sc Input}: $\bx \in \Sigma_1^{n-2\ceil{\log_4 n}-1}$, and $n\ge 16$\\
{\sc Output}: $\bc=\enc_{\epsilon; \Sigma_1}(\bx) \in \bS$ \\[-2mm]

\noindent {\em Enforcing \eqref{parity-constraint}}. Select $\alpha\in\Sigma_1$ be the symbol with the minimum weight in $\bx$, i.e ${\rm wt}_{\alpha}(\bx) \le {\rm wt}_{\beta}(\bx)$ for all $\beta\in\{{\tt A}, {\tt T}, {\tt C}, {\tt G}, {\tt M}\}$. Let $\by$ be the sequence obtained from $\bx$ by replacing all ${\tt M}$ with $\alpha$ and $\alpha$ with ${\tt M}$. Let $\bz=\by\alpha$, i.e. appends $\alpha$ to the end of $\by$ (for decoding purpose). We observe that 
\begin{equation*}
{\rm wt}_{{\tt M}}(\bx) \le 1/5 (n-2\ceil{\log_4 n}-1) + 1 \le 0.2 n \text{ for all } n\ge16.
\end{equation*}
\noindent {\em Enforcing \eqref{knuth-constraint}}. The sequence $\bz$ is of length $m=n-2\ceil{\log_4 n}$. For simplicity, we further assume that $m$ is even. In the special case that ${\rm wt}_{\tt A}(\bz)+{\rm wt}_{\tt T}(\bz) =  {\rm wt}_{\tt C}(\bz)+{\rm wt}_{\tt G}(\bz) = 1/2 (m-{\rm wt}_{{\tt M}}(\bz))$, no flipping action is needed. The encoder set the {\em flipping index} $t=0$. The DNA-representation of length $\ceil{\log_4 n}$ of $t=0$ is then ${\tt A}^{\ceil{\log_4 n}}$, and its complement sequence is ${\tt C}^{\ceil{\log_4 n}}$. The encoder then outputs $\bc=\bz({\tt A}{\tt C})^{\ceil{\log_4 n}}$ of length $n$. Clearly, in such a sequence $\bc$, we have 
\begin{equation*}
{\rm wt}_{\tt A}(\bc)+{\rm wt}_{\tt T}(\bc) =  {\rm wt}_{\tt C}(\bc)+{\rm wt}_{\tt G}(\bc) = \frac{n-{\rm wt}_{{\tt M}}(\bc)}{2}.
\end{equation*}
On the other hand, if ${\rm wt}_{\tt A}(\bz)+{\rm wt}_{\tt T}(\bz)< 1/2 (m-{\rm wt}_{{\tt M}}(\bz))$, and ${\rm wt}_{\tt C}(\bz)+{\rm wt}_{\tt G}(\bz) > 1/2 (m-{\rm wt}_{{\tt M}}(\bz))$. If we flip all symbols in $\bz$ then ${\rm wt}_{\tt A}(f_m(\bz))+{\rm wt}_{\tt T}(f_m(\bz))> 1/2 (m-{\rm wt}_{{\tt M}}(\bz))$. Therefore, there must be an index $0<t< m$ so that 
\begin{align*}
{\rm wt}_{\tt A}(f_t(\bz))+{\rm wt}_{\tt T}(f_t(\bz)) &=  {\rm wt}_{\tt C}(f_t(\bz))+{\rm wt}_{\tt G}(f_t(\bz)) \\
&= \frac{m-{\rm wt}_{{\tt M}}(f_t(\bz))}{2}.
\end{align*}
The encoder computes $\bu$, which is the DNA-representation of length $\ceil{\log_4 n}$ of $t$, and sets $\bv$ to be the complement sequence of $\bu$. The encoder then output $\bc=f_t(\bz) (\bu||\bv)$ of length $n$. Similarly, we can show that $\bc$ satisfies the constraint \eqref{knuth-constraint}. It is easy to verify that from a codeword $\bc$, we can decode uniquely the data sequence $\bx\in \Sigma_1^{n-2\ceil{\log_4 n}-1}$ according the suffix of length $2\ceil{\log_4 n}$ of $\bc$ and the last symbol in $\bz$.  
\end{proof}

\begin{lemma}
When $\epsilon>0.1$, the redundancy can be further reduced to only $O(1)$ symbols. On the other hand, when $\epsilon<0.1$, the channel capacity ${\bf cap}_{\epsilon; \Sigma_1}<\log_2 5$.
\end{lemma}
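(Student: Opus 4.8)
## Proof Proposal

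The plan is to treat the two regimes separately, since they require opposite kinds of argument: the case $\epsilon > 0.1$ is a \emph{code construction} (showing we can do better than Theorem~\ref{alphabet5}), while the case $\epsilon < 0.1$ is a \emph{converse} (showing a fundamental rate loss). For $\epsilon > 0.1$, the idea is to exploit the slack in inequalities \eqref{parity-constraint} and \eqref{knuth-constraint}: when $\epsilon > 0.1$ strictly, we no longer need the ${\tt GC}$-content of every $\by \in \B(\bx;\Sigma_1)$ to land in a window of width $0.2n$ around $0.5n$ — we have a window of width $2\epsilon n > 0.2n$. First I would quantify how much deviation from exact balance the channel tolerates: if $\bx$ has $\mathrm{wt}_{\tt M}(\bx) = w$ composite symbols and the remaining $n-w$ standard symbols split as $g$ in $\{{\tt C},{\tt G}\}$ and $n-w-g$ in $\{{\tt A},{\tt T}\}$, then across all $\by$ the count $\mathrm{wt}_{\tt C}(\by)+\mathrm{wt}_{\tt G}(\by)$ ranges over $[g, g+w]$ (or a subinterval, depending on which nucleotides ${\tt M}$ mixes, but $[g,g+w]$ is the worst case). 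So $\epsilon$-balance of the whole synthesis batch is equivalent to $[g, g+w] \subseteq [n/2-\epsilon n,\, n/2+\epsilon n]$, i.e. $g \ge n/2-\epsilon n$ and $g+w \le n/2+\epsilon n$, which forces $w \le 2\epsilon n$ and then leaves $g$ free in an interval of length $2\epsilon n - w$. The key observation is that when $\epsilon > 0.1$, we may allow $w$ to be as large as, say, $\lfloor (2\epsilon - 0.1)n \rfloor$ without any further constraint, and then we only need $g$ to lie in a window of constant-independent positive width $\ge 0.1n$ — a window that is satisfied automatically if the standard-symbol part is merely ``approximately balanced'' to within a constant. Thus, instead of Knuth's $O(\log n)$-redundancy exact-balancing on the standard part, one can use a constant-redundancy scheme (e.g. a single parity-type correction, or reserving $O(1)$ positions whose values can be set freely to nudge $g$ into the wide target window) to enforce the relaxed constraint. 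I would make this precise by showing the number of sequences in $\Sigma_1^n$ satisfying the relaxed constraints is at least $|\Sigma_1|^{n - O(1)}$, hence there is an $O(1)$-redundancy encoder, and combine it with the (already-established) fact that $w \le 2\epsilon n$ is itself enforceable with one parity symbol as in the proof of Theorem~\ref{alphabet5}.

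For the converse when $\epsilon < 0.1$, the plan is a counting/entropy argument showing $\mathcal{A}(n,\epsilon;\Sigma_1) \le c \cdot \lambda^n$ for some $\lambda < 5$. The point is that \emph{every} $\bx$ in an $\epsilon$-balanced code must satisfy $[g, g+w] \subseteq [n/2-\epsilon n, n/2+\epsilon n]$ as above, which forces (i) $w = \mathrm{wt}_{\tt M}(\bx) \le 2\epsilon n < 0.2n$ and (ii) the standard-symbol count $g$ confined to an interval of length $2\epsilon n - w \le 2\epsilon n$. So I would upper-bound the number of admissible $\bx$ by summing over $w$ from $0$ to $2\epsilon n$: choose the $w$ positions for ${\tt M}$ in $\binom{n}{w}$ ways, then fill the remaining $n-w$ positions with a standard-DNA string whose ${\tt GC}$-count lies in a window of length $2\epsilon n$. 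The number of such standard strings is $\sum_{g} \binom{n-w}{g} 2^{n-w}$ over the admissible window of $g$, and since $\epsilon < 0.1 < 0.25$ this is exponentially smaller than $4^{n-w}$ — specifically at most $2^{n-w} \cdot 2^{(n-w)H(\theta) + o(n)}$ where $\theta < 1/2$ is the window endpoint fraction, giving $\approx 2^{(n-w)(1+H(\theta))}$ with $1+H(\theta) < 2$. Combining, $\mathcal{A}(n,\epsilon;\Sigma_1) \le \sum_{w=0}^{2\epsilon n}\binom{n}{w} 2^{(n-w)(1+H(\theta)) + o(n)}$, and a standard Laplace/largest-term estimate bounds the exponential growth rate strictly below $\log_2 5$ (one checks the maximizing $w$ gives rate $< \log_2 5$ because even at $w = 2\epsilon n$ the per-symbol contribution of the ${\tt M}$-positions cannot overcome the deficit from the confined standard part). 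Hence ${\bf cap}_{\epsilon;\Sigma_1} < \log_2 5$.

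The main obstacle I anticipate is the $\epsilon < 0.1$ converse bookkeeping: one must be careful that the worst case over the \emph{choice} of which nucleotides ${\tt M}$ mixes does not accidentally relax the constraint — e.g. if ${\tt M} = (p_{\tt A}, p_{\tt T}, p_{\tt C}, p_{\tt G})$ has \emph{all four} probabilities positive, then each ${\tt M}$-position can independently be read as ${\tt C}/{\tt G}$ or as ${\tt A}/{\tt T}$, so the reachable range of $\mathrm{wt}_{\tt C}(\by)+\mathrm{wt}_{\tt G}(\by)$ is genuinely the full $[g, g+w]$; but if ${\tt M}$ mixes only ${\tt A}$ and ${\tt T}$ (both ${\tt AT}$-type), then ${\tt M}$-positions contribute $0$ to the ${\tt GC}$-count regardless, and the constraint on $g$ becomes $g \in [n/2-\epsilon n, n/2+\epsilon n]$ with no help from $w$ at all — which is \emph{more} restrictive, not less. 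So the statement ``for any ${\tt M}$'' in the lemma needs the argument to go through uniformly; fortunately the bound $[g,g+w] \subseteq [\,\cdot\,]$ is the \emph{loosest} possible reachable set, so using it for the converse is legitimate for every ${\tt M}$, and I would simply note this and carry the worst case $[g, g+w]$ throughout. The $\epsilon > 0.1$ direction I expect to be routine once the ``wide window'' observation is in place, modulo the minor annoyance of handling floors/parities of $2\epsilon n$ and $n/2$, which I would absorb into the $O(1)$ redundancy.
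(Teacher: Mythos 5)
The paper states this lemma without proof, so there is nothing to compare your route against directly; judged on its own, your plan has the right shape (achievability by exploiting slack for $\epsilon>0.1$, a counting converse for $\epsilon<0.1$) but both halves contain a concrete gap. For the $\epsilon>0.1$ direction, the mechanisms you propose for enforcing the relaxed constraint with $O(1)$ redundancy do not work: the target window for $g=\mathrm{wt}_{\tt C}+\mathrm{wt}_{\tt G}$ of the standard part has width $\Theta(n)$ but is centered at $(n-w)/2$, and an arbitrary data sequence can have $g$ off by $\Theta(n)$; a ``single parity-type correction'' or ``reserving $O(1)$ free positions'' changes $g$ by only $O(1)$ and cannot reach the window. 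Your fallback counting argument ($|\bS|\ge 5^{n-O(1)}$) proves only the existence of an injective map, not the efficient encoder that the surrounding statements are about. The repair is the tool the paper itself deploys in the proof of Theorem~\ref{special-symbol:alphabet6}: run the prefix-flipping of Theorem~\ref{alphabet5} but restrict the flipping index to a constant-size set $\bL=\{0,\lfloor\delta n\rfloor,2\lfloor\delta n\rfloor,\dots\}$ with $\delta=2\epsilon-0.2>0$; since each increment of $t$ moves $g$ by at most $1$ and the window has width at least $\delta n$, some $t\in\bL$ lands inside, and recording $t$ costs $O(1)$ symbols.

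For the converse, your uniformity argument over the choice of ${\tt M}$ is backwards. The condition $[g,g+w]\subseteq[n/2-\epsilon n,\,n/2+\epsilon n]$ is \emph{necessary} only when ${\tt M}$ mixes at least one of $\{{\tt A},{\tt T}\}$ with at least one of $\{{\tt C},{\tt G}\}$; if ${\tt M}$ mixes only within $\{{\tt A},{\tt T}\}$ the reachable GC-count is the single value $g$ and the true admissible set is \emph{larger} than the one you count (it contains words with $g+w$ outside the window), so your bound is not an upper bound on $\mathcal{A}(n,\epsilon;\Sigma_1)$ in that case --- and, contrary to what you write, ``$g$ in the window'' is the weaker, not the more restrictive, condition. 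The fix is a three-way case split on ${\tt M}$ (within $\{{\tt A},{\tt T}\}$, within $\{{\tt C},{\tt G}\}$, or across), each handled by the same Chernoff/entropy estimate; in every case the unconstrained maximizer ($w/n=1/5$, $g/n=2/5$, giving rate $\log_2 5$) is excluded precisely when $\epsilon<0.1$, so the strict inequality survives. With these two repairs the lemma follows; as written, neither half is complete.
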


\begin{lemma}
Consider an alphabet $\Sigma_1=\{{\tt A}, {\tt T}, {\tt C}, {\tt G}, {\tt M}\}$, where ${\tt M}$ is either a mixture of two nucleotides ${\tt A}, {\tt T}$ or a mixture of two nucleotides ${\tt C}, {\tt G}$. For any $\epsilon\ge 0.1$, there exists a linear-time encoder (and a corresponding linear-time decoder) to encode $\epsilon$-balanced code in $\Sigma_1^n$ with redundancy at most $O(1)$ symbols.  
\end{lemma}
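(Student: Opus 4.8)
The plan is to exploit the special structure of the composite letter ${\tt M}$ here: since ${\tt M}$ is a mixture of two nucleotides that lie on the \emph{same side} of the ${\tt G}{\tt C}$-partition (either both ${\tt A},{\tt T}$ or both ${\tt C},{\tt G}$), every ${\tt M}$ behaves, for the purpose of ${\tt G}{\tt C}$-content, exactly like a fixed element of its side. Concretely, write $\{{\tt A},{\tt T}\}$ as the ``light'' side (weight $0$) and $\{{\tt C},{\tt G}\}$ as the ``heavy'' side (weight $1$), and define for $\bx\in\Sigma_1^n$ the quantity $W(\bx)={\rm wt}_{\tt C}(\bx)+{\rm wt}_{\tt G}(\bx)$ if ${\tt M}$ sits on the light side, or $W(\bx)={\rm wt}_{\tt C}(\bx)+{\rm wt}_{\tt G}(\bx)+{\rm wt}_{\tt M}(\bx)$ if ${\tt M}$ sits on the heavy side. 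In either case, for \emph{every} $\by\in\B(\bx;\Sigma_1)$ we have ${\rm wt}_{\tt C}(\by)+{\rm wt}_{\tt G}(\by)=W(\bx)$ exactly — there is no spread at all, because replacing an ${\tt M}$ by either of its two constituents does not change which side it is on. So the target code is simply $\bS=\{\bx\in\Sigma_1^n : n/2-\epsilon n\le W(\bx)\le n/2+\epsilon n\}$, and for $\epsilon\ge 0.1$ (indeed any $\epsilon>0$) it suffices to build an encoder hitting the exact balanced slice $W(\bx)=n/2$ (assume $n$ even), which is a strictly stronger target contained in $\bS$.

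Next I would set up a Knuth-style balancing argument adapted to the weight $W$. Define a flipping rule $g$ that swaps the two sides while fixing ${\tt M}$: if ${\tt M}$ is on the light side, take $g:{\tt A}\leftrightarrow{\tt C},\ {\tt T}\leftrightarrow{\tt G},\ {\tt M}\mapsto{\tt M}$; if ${\tt M}$ is on the heavy side, take the analogous map that keeps ${\tt M}$ fixed but swaps each light symbol with a heavy one. The key point is that flipping one symbol changes $W$ by exactly $+1$, $-1$, or $0$ (the $0$ case only for ${\tt M}$), so $W(g_t(\bx))$ changes by at most $1$ as $t$ runs from $0$ to $n$; since $W(g_0(\bx))=W(\bx)$ and $W(g_n(\bx))=(n-{\rm wt}_{\tt M}(\bx))-W(\bx)$, and these straddle $n/2$ when... — here one must be slightly careful, because if ${\rm wt}_{\tt M}(\bx)$ is large the reachable range of $W$ may not contain $n/2$. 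This is exactly why the previous lemmas needed a preliminary step (``Enforcing \eqref{parity-constraint}'') to first force ${\rm wt}_{\tt M}(\bx)$ to be small, at the cost of one parity symbol, after which $W(\bx)\in[0,n-{\rm wt}_{\tt M}(\bx)]$ with $n-{\rm wt}_{\tt M}(\bx)\ge n/2$ is guaranteed and the intermediate-value argument goes through to produce a flip index $t\in[0,n]$ with $W(g_t(\bx))=n/2$. Record $t$ in $O(\log n)$ symbols.

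The improvement to $O(1)$ redundancy — the real content of the lemma — is where I would spend the most care, and it is the main obstacle. The idea is to avoid spending $\Theta(\log n)$ symbols to store the flip index $t$ by instead storing it \emph{implicitly}, using the same sequence-replacement / prefix-synchronization philosophy already in the paper: reserve a short fixed-length marker pattern, and arrange that the flip is performed ``in place'' so that the decoder can recover $t$ by scanning for the marker, or alternatively use a recursive/iterated balancing scheme (balance in blocks and then balance the block-indicators) whose total overhead telescopes to a constant. A cleaner route, given that we only need $\epsilon\ge 0.1$ rather than exact balance, is to use the extra slack: we do not need $W(\bx)=n/2$ exactly, only $W(\bx)\in[0.4n,0.6n]$, so a single ``coarse'' flip at one of a constant number of prescribed positions (say at multiples of $0.05n$) already lands $W$ in the target interval by the same at-most-$1$-per-step argument — which of the constantly many positions was used can be encoded in $O(1)$ symbols, and the small $\le 0.2n$ ${\tt M}$-budget from the parity step is what guarantees the reachable range of $W$ still covers $[0.4n,0.6n]$. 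I would then verify decodability (strip the $O(1)$ suffix encoding the flip position and the parity symbol, undo the flip, undo the ${\tt M}$-swap) and linear running time, which are routine. The one genuinely delicate check is making the constants consistent: that ${\rm wt}_{\tt M}(\bx)\le 0.2n$ together with the chosen spacing of candidate flip positions really does force $W$ into $[0.4n,0.6n]$ for all inputs and all placements of ${\tt M}$ on its side — this interplay of the two redundancy-saving tricks is where an off-by-a-constant error would hide.
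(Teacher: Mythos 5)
Your plan is essentially the paper's own argument: this lemma is stated without proof, but its two ingredients are exactly the ones you assemble, namely the one-symbol ``swap ${\tt M}$ with the least-frequent symbol'' step from the proof of Theorem~\ref{alphabet5} to force ${\rm wt}_{\tt M}\le 0.2n$, and the constant-size set of candidate flipping indices $\bL=\{0,2\floor{\epsilon n},4\floor{\epsilon n},\ldots\}$ from the proof of Theorem~\ref{special-symbol:alphabet6} to record the flip in $O(1)$ symbols. Your opening observation --- that for this ${\tt M}$ the value ${\rm wt}_{\tt C}(\by)+{\rm wt}_{\tt G}(\by)$ is the same for every $\by\in\B(\bx;\Sigma_1)$, so there is no spread to absorb --- is precisely why the whole interval $[0.4n,0.6n]$ is usable as a target and why $O(1)$ redundancy is attainable already at $\epsilon=0.1$.

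Two points to tighten. First, your intermediate claim that after the ${\tt M}$-budget step the intermediate-value argument ``goes through to produce a flip index $t$ with $W(g_t(\bx))=n/2$'' is false: with $g$ fixing ${\tt M}$, the reachable weights are exactly the integers between $W_0=W(\bx)$ and $W_n=(n-{\rm wt}_{\tt M}(\bx))-W(\bx)$ (light-side case), and e.g. $W_0=0.45n$, ${\rm wt}_{\tt M}(\bx)=0.2n$, $W_n=0.35n$ never reaches $n/2$; similarly the reachable range need not \emph{cover} $[0.4n,0.6n]$, it only needs to \emph{intersect} it. Neither slip infects your final construction, which correctly targets the interval rather than the point. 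Second, the ``delicate constant check'' you flag does go through with candidate spacing $2\epsilon n=0.2n$: since $W_0+W_n=n-{\rm wt}_{\tt M}(\bx)\ge 0.8n$, one has $\max(W_0,W_n)\ge 0.4n$ and $\min(W_0,W_n)\le 0.5n$, so either an endpoint already lies in $[0.4n,0.6n]$ and no flip is needed, or the path of prefix-flip weights crosses the entire interval; in the latter case, taking the first candidate $t_j$ with $W(g_{t_j}(\bx))\ge 0.4n$ gives $W(g_{t_j}(\bx))\le W(g_{t_{j-1}}(\bx))+(t_j-t_{j-1})<0.4n+0.2n$, so it lands inside. The heavy-side case ($W_0+W_n=n+{\rm wt}_{\tt M}(\bx)$) is symmetric, and the constantly many candidates are recorded in $O(1)$ symbols as you propose.
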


\begin{remark}
When recording the flipping index $t$, using the interleaving sequence $\bu||\bv$ helps to avoid a long run of repeated symbols (see Subsection~\ref{subsection:both}). According to Theorem~\ref{alphabet5}, one can obtain capacity-approaching codes for DNA composite, that the ${\tt G}{\tt C}$-content of all output DNA sequences is always within $(40\%,60\%)$. In addition, the result in Theorem~\ref{alphabet5} can be extended to an arbitrarily larger alphabet (see Corollary~\ref{alphabet6}). 
\end{remark}

\begin{corollary}\label{alphabet6} 
Consider an arbitrary composite alphabet: $\Sigma_2=\{{\tt A}, {\tt T}, {\tt C}, {\tt G}, {\tt M}, {\tt N}\}$, where ${\tt M}, {\tt N}$ can be any mixture of the four DNA nucleotides. For any $\epsilon\ge 1/6$, we have ${\bf cap}_{\epsilon; \Sigma_2}=\log_2 6$. In addition, there exists a linear-time encoder (and a corresponding linear-time decoder) to encode $\epsilon$-balanced code in $\Sigma_1^n$ with redundancy at most $O(\log_4 n)$ symbols. 
\end{corollary}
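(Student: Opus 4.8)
The plan is to mimic the proof of Theorem~\ref{alphabet5}, replacing the single composite letter ${\tt M}$ by the pair $\{{\tt M},{\tt N}\}$ and replacing the threshold $0.2n$ by $n/3$. First I would define the candidate code $\bS\subseteq\Sigma_2^n$ consisting of all sequences $\bx$ satisfying the two conditions
\begin{align}
&{\rm wt}_{{\tt M}}(\bx)+{\rm wt}_{{\tt N}}(\bx) \le n/3 \le 2\epsilon n \text{ for all } \epsilon\ge 1/6, \text{ and }\nonumber\\
&{\rm wt}_{\tt A}(\bx)+{\rm wt}_{\tt T}(\bx) = {\rm wt}_{\tt C}(\bx)+{\rm wt}_{\tt G}(\bx) = \tfrac{1}{2}\bigl(n-{\rm wt}_{{\tt M}}(\bx)-{\rm wt}_{{\tt N}}(\bx)\bigr).\nonumber
\end{align}
The verification that $\bS$ is an $\epsilon$-balanced code for every $\epsilon\ge 1/6$ is exactly the estimate in Theorem~\ref{alphabet5}: for any $\by\in\B(\bx;\Sigma_2)$ each composite symbol resolves to one of the four bases, so ${\rm wt}_{\tt C}(\by)+{\rm wt}_{\tt G}(\by)$ lies between ${\rm wt}_{\tt C}(\bx)+{\rm wt}_{\tt G}(\bx)\ge \tfrac{1}{2}(n-n/3)=n/3$ and ${\rm wt}_{\tt C}(\bx)+{\rm wt}_{\tt G}(\bx)+{\rm wt}_{\tt M}(\bx)+{\rm wt}_{\tt N}(\bx)=\tfrac{1}{2}(n+{\rm wt}_{{\tt M}}(\bx)+{\rm wt}_{{\tt N}}(\bx))\le \tfrac{1}{2}(n+n/3)=2n/3$; hence $\omega(\by)\in[1/3,2/3]\subseteq[1/2-\epsilon,1/2+\epsilon]$.

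Next I would construct the encoder in two stages, paralleling $\enc_{\epsilon;\Sigma_1}$. Stage one enforces the composite-weight bound: among the six symbols, at least one of ${\tt A},{\tt T},{\tt C},{\tt G}$, say $\gamma$, and in fact one can argue more directly that the two least-frequent symbols among all six carry total weight at most $n/3$ after at most $O(1)$ appended redundancy symbols; swap the roles of $\{{\tt M},{\tt N}\}$ with the two least-frequent \emph{standard} bases if necessary, recording the swap with $O(1)$ symbols, so that the resulting sequence $\bz$ has ${\rm wt}_{{\tt M}}(\bz)+{\rm wt}_{{\tt N}}(\bz)\le \tfrac{1}{6}\cdot(\text{length})+O(1)\le n/3$ for $n$ large. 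Stage two enforces the Knuth balancing condition on the ${\tt A}{\tt T}$ versus ${\tt C}{\tt G}$ split: using the same flipping rule $f:{\tt A}\leftrightarrow{\tt C},{\tt T}\leftrightarrow{\tt G}$ (which fixes each of ${\tt M},{\tt N}$ setwise and therefore does not change ${\rm wt}_{\tt M}+{\rm wt}_{\tt N}$), the quantity ${\rm wt}_{\tt A}(f_t(\bz))+{\rm wt}_{\tt T}(f_t(\bz))$ changes by $\pm 1$ at each step and sweeps from one side of $\tfrac12(m-{\rm wt}_{\tt M}-{\rm wt}_{\tt N})$ to the other as $t$ runs from $0$ to $m$, so an intermediate value $t$ achieving equality exists; append the DNA-representation of $t$ interleaved with its complement, costing $O(\log_4 n)$ symbols. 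Decoding strips the $O(\log_4 n)$-symbol suffix to recover $t$, applies $f_t$ again, and undoes the $O(1)$ bookkeeping of stage one.

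Since the total redundancy is $O(\log_4 n)+O(1)=O(\log_4 n)$, the code $\bS$ has size $|\Sigma_2|^{n-O(\log_4 n)}=6^{n-o(n)}$, giving ${\bf cap}_{\epsilon;\Sigma_2}=\log_2 6$ for every $\epsilon\ge 1/6$; the matching upper bound ${\bf cap}_{\epsilon;\Sigma_2}\le\log_2 6$ is trivial since $|\Sigma_2|=6$. The main obstacle, and the point needing the most care, is stage one: unlike the size-$5$ case where replacing the single minimum-weight symbol by ${\tt M}$ immediately gives weight $\le n/5+1$, here one must ensure that after forcing \emph{both} ${\tt M}$ and ${\tt N}$ to be low-weight the bound $n/3$ still holds while keeping the bookkeeping $O(1)$ and, crucially, while not disturbing the balancing achievable in stage two — in particular the swap used in stage one must be a permutation of the alphabet that commutes appropriately with the flipping rule $f$ (e.g.\ swapping ${\tt M}$ with ${\tt A}$ or ${\tt T}$, and ${\tt N}$ with ${\tt C}$ or ${\tt G}$, so that ${\rm wt}_{\tt A}+{\rm wt}_{\tt T}$ versus ${\rm wt}_{\tt C}+{\rm wt}_{\tt G}$ remains controllable), and one should double-check the edge cases where fewer than two standard bases have weight below $n/6$.
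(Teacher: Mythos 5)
Your proposal is correct and follows exactly the route the paper intends: the corollary is obtained by rerunning the proof of Theorem~\ref{alphabet5} with the pair $\{{\tt M},{\tt N}\}$ in place of ${\tt M}$, the threshold $n/3=2\cdot\tfrac16 n$ in place of $0.2n$ (justified by $w_1+w_2\le m/3$ for the two least-frequent of six symbols), and the same flipping/intermediate-value balancing of ${\rm wt}_{\tt A}+{\rm wt}_{\tt T}$ against ${\rm wt}_{\tt C}+{\rm wt}_{\tt G}$ with ${\tt M},{\tt N}$ fixed by $f$. Your worry about the stage-one swap ``commuting'' with $f$ is not a real obstacle, since the swap is performed once before balancing and $f$ only needs to act on the standard bases of the already-swapped word.
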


For a composite alphabet of size six, the following result provides the best options for the composite letters to obtain capacity-approaching codes for any $\epsilon\ge 0$. 

\begin{theorem}\label{special-symbol:alphabet6}
Consider an alphabet $\Sigma_2=\{{\tt A}, {\tt T}, {\tt C}, {\tt G}, {\tt M}, {\tt N}\}$, where ${\tt M}$ is a mixture of two nucleotides ${\tt A}, {\tt T}$, and ${\tt N}$ is a mixture of two nucleotides ${\tt C}, {\tt G}$. For any $\epsilon\ge 0$, the channel capacity is ${\bf cap}_{\epsilon; \Sigma_2}=\log_2 6$. In addition, there exists a linear-time encoder (and a corresponding linear-time decoder) to encode $\epsilon$-balanced code in $\Sigma_2^n$ with redundancy at most $O(\log_4 n)$ symbols. When $\epsilon>0, \epsilon n>1$, the redundancy can be further reduced to $O(1)$ symbols. Moreover, for sufficiently large $n$, there exists a linear-time encoder (and a corresponding linear-time decoder) that uses only one redundant symbol. 
\end{theorem}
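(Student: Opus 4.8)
The plan is to reduce everything to \emph{binary balancing with a free ternary refinement}, which is exactly what this alphabet is built for. Since ${\tt M}$ resolves only to ${\tt A}$ or ${\tt T}$ and ${\tt N}$ resolves only to ${\tt C}$ or ${\tt G}$, the partition $\Sigma_2=C_0\cup C_1$ with $C_0=\{{\tt A},{\tt T},{\tt M}\}$, $C_1=\{{\tt C},{\tt G},{\tt N}\}$ has the property that for every $\bx\in\Sigma_2^n$ and every $\by\in\B(\bx;\Sigma_2)$ the ${\tt G}{\tt C}$-weight ${\rm wt}_{\tt C}(\by)+{\rm wt}_{\tt G}(\by)$ equals the fixed number $w(\bx):=|\{i: x_i\in C_1\}|$, regardless of the mixing ratios in ${\tt M},{\tt N}$. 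Hence a code is $\epsilon$-balanced iff $w(\bx)\in[n/2-\epsilon n,\,n/2+\epsilon n]$ for all codewords; equivalently, writing each position as a class bit together with a refinement symbol in $\{0,1,2\}$ naming which of the three letters of that class is used, the $\epsilon$-balanced words of $\Sigma_2^n$ are precisely the product of (binary strings of length $n$ of weight in $[n/2-\epsilon n,n/2+\epsilon n]$) with ($\{0,1,2\}^n$, entirely unconstrained). Because $|C_0|=|C_1|=3$ this gives $\mathcal A(n,\epsilon;\Sigma_2)=3^n\sum_w\binom{n}{w}$ over the admissible window; the window contains the central term $\binom{n}{\lfloor n/2\rfloor}$ (for $\epsilon>0$, and for even $n$ when $\epsilon=0$) and $\tfrac1n\log_2\binom{n}{\lfloor n/2\rfloor}\to1$, while $\sum_w\binom nw\le 2^n$, so ${\bf cap}_{\epsilon;\Sigma_2}=1+\log_2 3=\log_2 6$ for every $\epsilon\ge 0$.

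For the $O(\log_4 n)$-redundancy encoder I would mirror the construction of Theorem~\ref{alphabet5}, run on the class bits, using the involution ${\tt A}\leftrightarrow{\tt C}$, ${\tt T}\leftrightarrow{\tt G}$, ${\tt M}\leftrightarrow{\tt N}$, which swaps $C_0$ with $C_1$ and therefore changes $w$ of any flipped prefix in the usual Knuth fashion. Knuth's argument yields a prefix-flip index $t$ placing $w$ on the centre of the window; record $t$ by its DNA-representation $\bu$ of length $\ceil{\log_4 n}$, append the block $\bu||\bar\bu$ (which carries exactly $\ceil{\log_4 n}$ class-$1$ symbols out of $2\ceil{\log_4 n}$, hence contributes no bias), and output $\bc=f_t(\bz)(\bu||\bar\bu)$. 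Decoding reads the suffix, recovers $t$, and un-flips. When $\epsilon$ is a fixed positive constant with $\epsilon n>1$, the same machinery is used with a \emph{coarse} index: pick candidate flip positions $0=t_0<t_1<\cdots<t_C$ with consecutive gaps at most $2\epsilon n$, which needs only $C=O(1/\epsilon)=O(1)$ of them; since $w(f_t(\cdot))$ moves by $\pm1$ per unit of $t$ and its values at $t_0$ and $t_C$ lie on opposite sides of $n/2$, some $w(f_{t_j}(\cdot))$ falls inside the window, and the index $j$ is recorded in $O(1)$ symbols (again interleaved with its complement). This gives $O(1)$ redundancy.

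For a single redundant symbol at large $n$ I would split the input $\bx\in\Sigma_2^{n-1}$ on the value $w(\bx)$. If $|w(\bx)-n/2|\le\epsilon n$ — the typical case, holding for all but a $2^{-\Omega(n)}$ fraction of inputs by a Chernoff bound on the $\mathrm{Bin}(n-1,1/2)$ variable $w(\bx)$ — output $\bc=\bx\,{\tt A}$, which is $\epsilon$-balanced and ends in ${\tt A}$. Otherwise $\bx$ lies in the exponentially small tail $\mathcal T$ of inputs with $w$ outside the window; index $\mathcal T$ injectively, write the index in base $6$ using only $n-1-d$ symbols with $d=\Omega(n)$, and spend the $d+1$ freed positions to (i) force the last symbol into $\Sigma_2\setminus\{{\tt A}\}$ as a case flag and (ii) run the logarithmic-redundancy Knuth balancer above on the compressed string so that the whole length-$n$ word is $\epsilon$-balanced (in fact exactly balanced) — which is possible since $d=\Omega(n)$ dwarfs the $\Theta(\log_4 n)$ Knuth overhead. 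Cross-case injectivity is guaranteed by the last-symbol flag, and the decoder first inspects that symbol. (When $\epsilon$ is large enough that the $C\le 5$ coarse candidates of the previous paragraph already fit into the six values of one symbol, that slicker route also works and avoids the split.)

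The main obstacle is the ``compress the tail'' step: one must index $\mathcal T$ by a rank function computable in both directions in linear time, i.e.\ enumerative/arithmetic coding of binary strings restricted by a one-sided weight threshold, which I would implement block-wise to keep the integer arithmetic $O(n)$. The remaining work is routine bookkeeping: handling parity (when $n$ or the residual length is odd the Knuth step lands one off centre and needs a one-symbol patch, which is absorbed into the window using $\epsilon n>1$), checking that the various ``$n$ sufficiently large'' thresholds (for $\mathcal T$ to be nonempty-complement, for $d\ge\Theta(\log_4 n)$, for $6^{n-1}2^{-\Omega(n)}<6^{n-1}$, etc.) are mutually consistent, and verifying that every step — projection to class bits, prefix flipping, DNA-representation, interleaving, enumerative (de)coding — runs in $O(n)$ time.
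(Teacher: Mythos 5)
Your treatment of the capacity and of the $O(\log_4 n)$ and $O(1)$ encoders matches the paper's in substance: the partition $C_0=\{{\tt A},{\tt T},{\tt M}\}$, $C_1=\{{\tt C},{\tt G},{\tt N}\}$ is exactly the paper's projection ${\rm Odd}(\phi(\bx))$ onto class bits, your involution is the paper's flipping rule $f$ extended by ${\tt M}\leftrightarrow{\tt N}$, and your coarse grid of flip positions with consecutive gaps at most $2\epsilon n$ is the paper's set $\bL$ of size $\lfloor 1/(2\epsilon)\rfloor+1$. Your explicit count $\mathcal A(n,\epsilon;\Sigma_2)=3^n\sum_{w}\binom{n}{w}$ over the admissible window is a clean justification of the capacity claim that the paper only implies through the existence of the low-redundancy encoder.

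Where you genuinely diverge is the one-redundant-symbol claim. The paper reduces it to enforcing ${\rm wt}(\by)\in[(0.5-\epsilon)n,(0.5+\epsilon)n]$ on the binary class string and invokes Lemma~\ref{theoremSRT} from the companion work \cite{TT:2020}, which achieves a single redundant symbol for $n\ge(1/\epsilon^2)\ln n$ via the sequence replacement technique --- a linear-time procedure with only small-integer bookkeeping. You instead split on whether $w(\bx)$ already lies in the window and compress the exponentially small tail $\mathcal T$ by enumerative coding. As a counting/existence argument this is sound, but the linear-time claim is where it strains: ranking and unranking weight-restricted binary strings requires arithmetic on $\Theta(n)$-bit integers, so the natural implementation costs $\Theta(n^2)$ bit operations, and your ``block-wise'' fix is asserted rather than constructed. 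So your route establishes the rate statement but does not obviously deliver the stated linear-time encoder/decoder; the sequence-replacement reduction is the ingredient you would need to close that gap cleanly. (Both routes require $\epsilon>0$ fixed for the single-symbol result --- for $\epsilon=0$ one symbol cannot suffice since $\binom{n}{n/2}3^n<6^{n-1}$ once $n>36$ --- so the ``for sufficiently large $n$'' clause must be read inside the $\epsilon>0$ case, as in the paper.)
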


\begin{proof}
We briefly describe the main idea of our construction and defer the detailed proof to the full paper. Similar to the construction of the encoder $\enc_{\epsilon; \Sigma_1}$ in Theorem~\ref{alphabet5}, we can obtain a linear-time encoder to encode $\epsilon$-balanced code in $\Sigma_2^n$, denoted by $\enc_{\epsilon; \Sigma_2}$, with redundancy $2\ceil{\log_4 n}$ symbols for any $\epsilon\ge0$. Here, we modify the flipping rule $f$ as follows: ${\tt A} \to {\tt C}, {\tt C} \to {\tt A}, {\tt T} \to {\tt G}, {\tt G} \to {\tt T},  {\tt M} \to {\tt N},$ and ${\tt N} \to {\tt M}$. Again, $\ceil{\log_4 n}$ symbols are required to represent the flipping index $t\le n$, and the other $\ceil{\log_4 n}$ symbols represent the complement sequence (to ensure the generating suffix is ${\tt G}{\tt C}$-balanced).
\vspace{1mm} 

In the case that $\epsilon>0$ and $\epsilon n>1$, in our companion work \cite{thanh:dna}, we show that such a flipping index can be found in a set $\bL\subset\{0,1,\ldots n\}$ of constant size:
\begin{equation*}
\bL=\{0, 2\floor{\epsilon n}, 4\floor{\epsilon n}, 6\floor{\epsilon n}\ldots \}, \text{ here } |\bL| \le \floor{1/2\epsilon}+1. 
\end{equation*} 
Consequently, to record the flipping index $t$, we only require $O(1)$ redundant symbols. Finally, we show that the redundancy can be further reduced to only one symbol for sufficiently large $n$. We define the following one-to-one correspondence $\phi$:
\begin{align*}
&{\tt A} \leftrightarrow 00,\quad {\tt T} \leftrightarrow 01,\quad {\tt M}({\tt A}|{\tt T})   \leftrightarrow 02, \\ 
&{\tt C} \leftrightarrow 10,\quad {\tt G} \leftrightarrow 11,\quad {\tt N}({\tt C}|{\tt G})  \leftrightarrow 12.
\end{align*}
Therefore, given a composite sequence $\bx\in\Sigma_2^n$, we have a corresponding sequence $\phi(\bx)$ of length ${2n}$. Let ${\rm Odd}(\phi(\bx))$ be the subsequence $\by$ of $\bx$ consisting of all the symbols at the odd indices of $\phi(\bx)$. We have $\by\in\{0,1\}^n$. It is easy to verify that it suffices to enforce ${\rm wt}(\by)\in [(0.5-\epsilon)n,(0.5+\epsilon)n]$. In our companion work \cite{TT:2020}, we show that one redundant symbol is sufficient for any $n\ge(1/\epsilon^2) \ln n$ (refer to Lemma~\ref{theoremSRT}). 
\end{proof}

\begin{lemma}[Followed by Theorem 3, Nguyen \et{} \cite{TT:2020}]\label{theoremSRT}
Given $\epsilon>0$ and an integer $n$ such that $(1/\epsilon^2) \ln n \le n$, there exists a linear-time encoder $\enc_{\epsilon} :\{0,1\}^{n-1} \to\{0,1\}^n$ such that for all $\bx\in \{0,1\}^{n-1}$ if $\by=\enc_\epsilon(\bx)$ then ${\rm wt}(\by) \in [(0.5-\epsilon)n,(0.5+\epsilon)n]$.
\end{lemma}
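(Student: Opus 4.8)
The plan is to separate the statement into a counting claim and a construction. The counting claim is that the set $\B_{n,\epsilon}:=\{\by\in\{0,1\}^n:\ {\rm wt}(\by)\in[(0.5-\epsilon)n,(0.5+\epsilon)n]\}$ has size at least $2^{n-1}$, so that a one-redundant-symbol encoder is not excluded on cardinality grounds; the construction then must exhibit an explicit linear-time encoder meeting this bound. For the counting claim I would use Hoeffding's inequality: for a uniformly random $\by\in\{0,1\}^n$ we have $\Pr[\,|{\rm wt}(\by)-n/2|>\epsilon n\,]\le 2\exp(-2\epsilon^2 n)$, and the hypothesis $(1/\epsilon^2)\ln n\le n$ gives $\epsilon^2 n\ge\ln n$, hence this probability is at most $2/n^2\le 1/2$ for $n\ge 2$. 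Therefore $|\B_{n,\epsilon}|\ge 2^{n-1}$, and in fact the hypothesis leaves a comfortable $2/n^2$ margin, which is exactly the slack the construction has to absorb.

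For the construction I would build on Knuth's complementation. For $\bz\in\{0,1\}^{N}$ the map $t\mapsto{\rm wt}(f_t(\bz))$, where $f_t$ flips the first $t$ symbols, is a $\pm1$ walk from ${\rm wt}(\bz)$ to $N-{\rm wt}(\bz)$; by the discrete intermediate-value property it attains some value inside the target window, and since the walk must cross the full width-$2\epsilon n$ window there are in fact $\Omega(\epsilon n)$ admissible flip positions $t$. A naive encoder would spend $\Theta(\log n)$ redundant symbols recording $t$ (this already yields the $O(\log n)$-redundancy variant). To come down to a single redundant symbol one reserves exactly one symbol as a flag and uses the $\Omega(\epsilon n)$ admissible flip positions as an information-carrying pointer, so that the $\Theta(\log(\epsilon n))$ data symbols that no longer fit in the flipped prefix are encoded in the choice of $t$; the generous $2/n^2$ counting margin is what guarantees there is room for this to stay injective. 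The common case is routed separately: by the concentration estimate above, $\bx$ itself already lies in $\B_{n,\epsilon}$ for all but a $2/n^2$ fraction of inputs, so the flag distinguishes a trivial "no-flip" branch from the "flip" branch. Exactly as in the proof of Theorem~\ref{encoder:defined}, the whole procedure is then a walk on a graph of length-$n$ words whose edges are injective, efficiently invertible local edits, so in- and out-degrees are at most one; this gives termination, $O(1)$ amortised edits, and a linear-time decoder obtained by running the edits in reverse. (Alternatively one may bypass the combinatorial encoder and use lexicographic enumerative coding between $\{0,1\}^{n-1}$ and an ordered size-$2^{n-1}$ subset of $\B_{n,\epsilon}$, which is linear in the number of symbols once the relevant partial binomial sums are tabulated.)

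The step I expect to be the real obstacle is making the "pointer carries data" device genuinely decodable in linear time: the family of admissible flip positions is defined through the weight profile of the very word the decoder is trying to reconstruct, so a priori the decoder cannot recompute that family. Breaking this circularity — by embedding a short, always-recognizable marker at the flip boundary, or by recursing on the $\Theta(\log(\epsilon n))$-symbol pointer block while still charging only one symbol of redundancy in total — is the delicate bookkeeping of \cite{TT:2020}, and it is also where the hypothesis $(1/\epsilon^2)\ln n\le n$ is used in full strength rather than in the weaker form $n=\Omega(1/\epsilon^2)$ that the bare cardinality count would require.
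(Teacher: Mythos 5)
The first thing to say is that the paper does not prove this lemma at all: as the bracketed attribution indicates, it is imported verbatim from Theorem~3 of \cite{TT:2020}, so there is no in-paper argument to compare against. Judged on its own merits, your proposal is correct in its first half and genuinely incomplete in its second. The Hoeffding computation is right and correctly isolates the role of the hypothesis $(1/\epsilon^2)\ln n\le n$: the number of length-$n$ words outside the weight window is at most $2^{n+1}e^{-2\epsilon^2 n}\le 2^{n+1}/n^2\le 2^{n-1}$ for $n\ge 2$. But a cardinality bound only shows that a one-bit-redundancy injection is not excluded; the lemma asserts an explicit linear-time \emph{encoder and decoder}, and that is exactly what you do not supply. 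Your ``pointer carries data'' device is the gap, and you flag it yourself: the set of admissible flip positions is determined by the weight profile of the very word the decoder must reconstruct, so the decoder cannot recompute that set, and no mechanism for breaking the circularity is given. Since the entire content of the lemma is the existence of a \emph{decodable} scheme with a single redundant bit, deferring precisely that step back to \cite{TT:2020} leaves the statement unproved.

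The hypothesis $(1/\epsilon^2)\ln n\le n$ also points away from the Knuth-flipping route you chose and toward the sequence replacement technique, i.e., the same mechanism as the proof of Theorem~\ref{encoder:defined} in this paper, and your own counting step is the engine of that construction rather than a sanity check. Append a flag bit to $\bx$ to form a length-$n$ word; if its weight falls outside the window, use the bound $2^{n+1}/n^2\le 2^{n-1}$ to injectively re-encode the violating word as a length-$(n-1)$ string followed by the opposite flag value, and iterate. Each replacement is an injective, efficiently invertible edit, so the associated functional graph has in- and out-degree at most one; exactly as in Theorem~\ref{encoder:defined} this yields termination, a constant amortised number of iterations, and a decoder that runs the replacements backwards until the flag bit signals the original word. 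If you rebuild the second half of your argument around this replacement map (whose injectivity is guaranteed by the very margin $2/n^2$ you computed), instead of around a data-carrying flip position, the proof closes; as written, the decisive decodability step is missing.
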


\subsection{Enforcing Simultaneously RLL Constraint and ${\tt G}{\tt C}$-Content Constraint in Composite DNA}\label{subsection:both}
The following result is immediate. 
\begin{corollary} We have the following statements. 
\begin{itemize}
\item Consider an arbitrary composite alphabet of size 5: $\Sigma_1=\{{\tt A}, {\tt T}, {\tt C}, {\tt G}, {\tt M}\}$, where ${\tt M}$ can be any mixture of the four DNA nucleotides. For any $\ell>0$ and $\epsilon\ge 0.1$, we have ${\bf cap}_{\ell,\epsilon; \Sigma_1}={\bf cap}_{\ell;\Sigma_1}$.
\item Consider an arbitrary composite alphabet: $\Sigma_2=\{{\tt A}, {\tt T}, {\tt C}, {\tt G}, {\tt M}, {\tt N}\}$, where ${\tt M}, {\tt N}$ can be any mixture of the four DNA nucleotides. For any $\ell>0$ and $\epsilon\ge 1/6$, we have ${\bf cap}_{\ell,\epsilon; \Sigma_2}={\bf cap}_{\ell;\Sigma_2}$.
\item Consider an alphabet $\Sigma_2'=\{{\tt A}, {\tt T}, {\tt C}, {\tt G}, {\tt M}, {\tt N}\}$, where ${\tt M}$ is a mixture of two nucleotides ${\tt A}, {\tt T}$, and ${\tt N}$ is a mixture of two nucleotides ${\tt C}, {\tt G}$. For any $\ell>0$ and $\epsilon\ge0$, we have ${\bf cap}_{\ell,\epsilon;\Sigma_2'}={\bf cap}_{\ell;\Sigma_2'}$.
\end{itemize}
\end{corollary}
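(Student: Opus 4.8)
The inequality ${\bf cap}_{\ell,\epsilon; \Sigma_k}\le {\bf cap}_{\ell; \Sigma_k}$ is immediate in all three cases, since an $(\ell,\epsilon)$-constrained code is in particular an $\ell$-RLL code; so the plan is, for each bullet, to produce an $(\ell,\epsilon)$-constrained code in $\Sigma_k^n$ of size $2^{n({\bf cap}_{\ell; \Sigma_k}-o(1))}$. I would do this by the method of types. Let $\bar\tau$ denote the stationary symbol-frequency vector of the maxentropic (Parry) measure of the $\ell$-RLL system over $\Sigma_k$, i.e. the composition induced by the left/right Perron eigenvectors of ${\bf D}(\ell;\Sigma_k)$. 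Partition the $\ell$-RLL words of length $n$ by composition into ${\rm poly}(n)$ classes. Since the system is mixing, a local central limit theorem gives that the class of composition $\bar\tau$ alone has size $2^{n{\bf cap}_{\ell; \Sigma_k}-O(\log n)}$, and every word of a fixed composition shares the same range of values of ${\rm wt}_{\tt C}(\by)+{\rm wt}_{\tt G}(\by)$ over $\by\in\B(\bx;\Sigma_k)$. Hence the whole statement reduces to checking that $\bar\tau$ is \emph{$\epsilon$-safe}, meaning that both the smallest and the largest value of ${\rm wt}_{\tt C}(\by)+{\rm wt}_{\tt G}(\by)$ over $\by\in\B(\bx;\Sigma_k)$, for $\bx$ of composition $\bar\tau$, lie in $[(1/2-\epsilon)n,(1/2+\epsilon)n]$; the corresponding composition class is then the desired $(\ell,\epsilon)$-constrained code.

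For $\Sigma_2'$ with ${\tt M}:{\tt A}|{\tt T}$ and ${\tt N}:{\tt C}|{\tt G}$ this holds for every $\epsilon\ge 0$: the forbidden set $\mathcal{F}(\ell;\Sigma_2')$ is invariant under both ${\tt A}\leftrightarrow{\tt C},{\tt T}\leftrightarrow{\tt G},{\tt M}\leftrightarrow{\tt N}$ and ${\tt A}\leftrightarrow{\tt T},{\tt C}\leftrightarrow{\tt G}$, so $\bar\tau$ (being unique) is fixed by both, forcing ${\rm wt}_{\tt A}={\rm wt}_{\tt T}={\rm wt}_{\tt C}={\rm wt}_{\tt G}$ and ${\rm wt}_{\tt M}={\rm wt}_{\tt N}$ at $\bar\tau$; since here ${\tt M}$ always realises to a non-${\tt G}{\tt C}$ base and ${\tt N}$ always to a ${\tt G}{\tt C}$ base, the ${\tt G}{\tt C}$-content of every realisation equals exactly $1/2$ (a short rounding argument covers odd $n$ with $\epsilon=0$), so $\bar\tau$ is $\epsilon$-safe for all $\epsilon\ge0$. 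For $\Sigma_1$ (resp.\ $\Sigma_2$) with arbitrary mixtures, the hypothesis $\epsilon\ge 0.1$ (resp.\ $\epsilon\ge 1/6$) is exactly what reduces $\epsilon$-safety of $\bar\tau$ to the single quantitative statement that the maxentropic frequency of ${\rm wt}_{\tt M}$ (resp.\ of ${\rm wt}_{\tt M}+{\rm wt}_{\tt N}$) is at most $2\epsilon$, together with the fact that the remaining mass splits between $\{{\tt A},{\tt T}\}$ and $\{{\tt C},{\tt G}\}$ balancedly enough; the frequency bound follows from the structure of $\mathcal{F}(\ell;\Sigma_k)$ — every forbidden family that constrains a standard letter also constrains each composite letter, but not conversely, so a composite letter is at least as constrained as a standard one and its maxentropic frequency is at most $1/|\Sigma_k|$, which is $2\epsilon$ precisely at the stated thresholds — and is consistent with the spectral data underlying Tables~\ref{table1}–\ref{table1'}.

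An alternative route, closer to the encoder-based development of Sections~\ref{RLLsection} and~\ref{epsilon-subsection}, is to compose the block-wise $\ell$-RLL encoder $\enc_{\ell; \Sigma_k}$ (cf.\ Remark~\ref{remark:extend}) with the $\epsilon$-balancing machinery of Theorem~\ref{alphabet5}, Corollary~\ref{alphabet6}, or Theorem~\ref{special-symbol:alphabet6}: the latter uses only $O(\log_4 n)=o(n)$ redundant symbols, and its three primitives — the parity swap of the least-frequent composite letter, the Knuth prefix flip $f_t$, and appending the interleaved index block $\bu||\bv$ — can in principle each be carried out while staying in the $\ell$-RLL system, because the flipping rule $f$ maps $\mathcal{F}(\ell;\Sigma_k)$ onto itself (so flipping an entire factor neither creates nor destroys a forbidden window) and $\bu||\bv$ is interleaved precisely to avoid long runs. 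I expect the main obstacle to lie here: one must guarantee that a balance-achieving flip index $t$ can be chosen that does not produce a forbidden run of length $\ell+1$ straddling position $t$, and that the parity-swap step is also forbidden-set-preserving — the only problematic cuts occur at the $O(1)$ positions where a run of symbols in one $f$-orbit abuts a run of its image, so they can be avoided while keeping the Knuth counting argument intact, which is also why the cleanest case, $\Sigma_2'$ with ${\tt M}:{\tt A}|{\tt T}$, ${\tt N}:{\tt C}|{\tt G}$, needs no parity swap and succeeds for all $\epsilon\ge 0$. Either way, once the $\epsilon$-safety of $\bar\tau$ (route one) or the run-boundary bookkeeping (route two) is settled, the claimed equalities follow immediately by combining with the no-rate-loss results of the previous section.
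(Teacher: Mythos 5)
Your second route is, in substance, the paper's own argument: the authors state the corollary as ``immediate'' and justify it only by the subsequent discussion and Example~\ref{ex:higher}, i.e.\ compose an $\ell$-RLL code with the prefix-flip balancing of Theorems~\ref{alphabet5}--\ref{special-symbol:alphabet6}, absorb the $O(\log_4 n)$ (or $O(1)$) balancing redundancy plus $O(1)$ buffer symbols between $\bz_1,\bz_2,\bz_3$, and conclude that the extra cost is $o(n)$ so the capacity is unchanged. Your first route (method of types on the Parry/maxentropic composition of the RLL graph) is a genuinely different argument and is arguably the more appropriate one for a pure capacity statement, since the paper's encoder $\enc_{\ell;\Sigma_k}$ only exists for $n$ below the bound \eqref{UBmax} and by itself does not certify ${\bf cap}_{\ell,\epsilon;\Sigma_k}\ge{\bf cap}_{\ell;\Sigma_k}$; the typicality argument sidesteps the encoders entirely. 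To your credit, you also correctly isolate where the real work is in each route, which is more than the paper does.

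That said, two of the steps you lean on are not merely ``bookkeeping'' and one of them is false as stated. First, your claim that the flipping rule $f$ maps $\mathcal{F}(\ell;\Sigma_k)$ onto itself holds for $\Sigma_2'$ and for a two-nucleotide mixture such as ${\tt M}:{\tt A}|{\tt C}$, but it fails for a three-nucleotide mixture: for ${\tt M}:{\tt A}|{\tt T}|{\tt C}$ the family $\{{\tt G},{\tt M}\}^{\ell+1}$ is \emph{not} forbidden, yet $f$ (with ${\tt T}\leftrightarrow{\tt G}$) maps it onto the forbidden family $\{{\tt T},{\tt M}\}^{\ell+1}$, so flipping a prefix of an admissible word can create a forbidden window in the interior of the flipped block, not just at the cut. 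Since the first two bullets quantify over \emph{arbitrary} mixtures, route two does not go through for them without a new idea (the parity-swap step ${\tt M}\leftrightarrow\alpha$ is likewise not forbidden-set-preserving, as you suspected). Second, in route one the assertion that a ``more constrained'' letter has maxentropic frequency at most $1/|\Sigma_k|$, and that the residual mass splits evenly between $\{{\tt A},{\tt T}\}$ and $\{{\tt C},{\tt G}\}$, is a plausible monotonicity heuristic, not a proof; for asymmetric mixtures such as ${\tt M}:{\tt A}|{\tt C},\ {\tt N}:{\tt A}|{\tt G}$ there is no symmetry forcing $\bar\tau_{\tt A}+\bar\tau_{\tt T}=\bar\tau_{\tt C}+\bar\tau_{\tt G}$, and the $\epsilon$-safety of $\bar\tau$ at the threshold $\epsilon=0.1$ (resp.\ $1/6$) needs an actual Perron-eigenvector computation or a comparison argument. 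Your treatment of the third bullet ($\Sigma_2'$), by contrast, is complete in outline and matches the paper: there the symmetry and the invariance of $f$ both hold, which is exactly why that case works for all $\epsilon\ge 0$.
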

Via simple modifications, our proposed encoders, presented in Section~\ref{RLLsection} and Subsection~\ref{epsilon-subsection}, can be efficiently combined to simultaneously enforce both RLL constraint and ${\tt G}{\tt C}$-content constraint in composite DNA. 
\vspace{1mm}

For example, consider a composite alphabet $\Sigma_2=\{{\tt A}, {\tt T}, {\tt C}, {\tt G}, {\tt M}, {\tt N}\}$, where ${\tt M}$ is a mixture of ${\tt A}$ and ${\tt T}$ while ${\tt N}$ is a mixture of ${\tt C}$ and ${\tt G}$. 
Suppose that ${\bf r}_\epsilon$ is the redundancy of the encoder in Theorem~\ref{special-symbol:alphabet6} for codewords of length $n$. According to Theorem~\ref{special-symbol:alphabet6}, ${\bf r}_\epsilon=O(\log_4 n)$ when $\epsilon=0$ while ${\bf r}_\epsilon=O(1)$ whenever $\epsilon>0$. On the other hand, suppose that ${\bf r}_\ell$ is the redundancy of the encoder $\enc_{\ell;\Sigma_2}$ from Theorem~\ref{encoder:defined} for $\ell$-RLL codes of length $m$. Recall that ${\bf r}_\ell=1$ if $m$ satisfies the bound in \eqref{UBmax}. It is easy to show that there exists a linear-time encoder for $(\ell, \epsilon)$-constrained codes over $\Sigma_2^n$ with at most ${\bf r}_\ell+{\bf r}_\epsilon+6$ symbols. 
\vspace{1mm}

We illustrate the construction of $(\ell,\epsilon)$-constrained encoder through the following example, and defer the detailed constructions to the full paper. 

\begin{example}\label{ex:higher}
Suppose $n=300, \epsilon=0.1$, and $\ell=6$. According to Table~\ref{compare}, we have ${\bf r}_\ell=1$. On the other hand, a flipping index can be found in a set $\bL=\{0, 2\floor{\epsilon n}, 4\floor{\epsilon n}, 6\floor{\epsilon n} \ldots \}=\{0,60,120,180,240,300\}$ of size six. Hence, the DNA-representation $\bu$ of the flipping index is of length 2, and the interleaving sequence of $\bu$ and its complement sequence $\bv$ is of length 4. Thus, we have ${\bf r}_{\epsilon}=4$. We now show that there exists an efficient $(\ell,\epsilon)$-constrained encoder with at most ${\bf r}_\ell+{\bf r}_{\epsilon}+6=11$ redundant symbols. 
\vspace{0.1mm}

From the data sequence $\bx\in\Sigma_2^{289}$, we obtain $\by=\enc_{\ell;\Sigma_2}(\bx)\in\Sigma_2^{290}$. We then search for the $\epsilon$-balanced index $t$ in the set $\bL$. Suppose that $\by=y_1y_2\ldots y_t y_{t+1}\ldots y_{290}$. Using the flipping rule $f$ as in Theorem~\ref{special-symbol:alphabet6}: ${\tt A} \to {\tt C}, {\tt C} \to {\tt A}, {\tt T} \to {\tt G}, {\tt G} \to {\tt T},  {\tt M} \to {\tt N},$ and ${\tt N} \to {\tt M}$, we obtain a $\epsilon$-balanced sequence 
\begin{equation*}
\bz={\color{blue}{\underbrace{f(y_1)f(y_2)\ldots f(y_t)}_{\bz_1}}} {\color{black}{ \underbrace{y_{t+1}\ldots y_{290}}_{\bz_2}}} {\color{red}{(\underbrace{\bu||\bv}_{\bz_3})}},
\end{equation*}
\noindent and there is no forbidden subsequence in each $\bz_1, \bz_2$ and $\bz_3$. The 6 additional redundant symbols are to ensure that there is no forbidden subsequence created in the boundaries between $\bz_1$ and $\bz_2$, or between $\bz_2$ and $\bz_3$. We also require these 6 symbols to be ${\tt G}{\tt C}$-balanced. It is easy to see that it suffices to uses 4 symbols between $\bz_1$ and $\bz_2$, and 2 symbols between $\bz_2$ and $\bz_3$. Thus, the encoder uses 11 redundant symbols to encode codewords of length 300, and hence, the rate of the encoder is $\log_2{6^{289}}/300=2.490$ bits/symbol. 
\end{example}

\section{Conclusion}

 We have proposed efficient constrained coding techniques for composite DNA to enforce the biological constraints, including the runlength-limited constraint and the ${\tt G}{\tt C}$-content constraint, into every DNA synthesized oligo, regardless of the mixture of bases in each composite letter and their corresponding ratio. To the best of our knowledge, no such codebooks are known prior to this work. For each constraint, we have presented several results on the capacity of the constrained channel, and efficient encoders/decoders, and provided the best options for the composite letters to obtain capacity-approaching codes with the lowest possible redundancy.

\end{document}